\documentclass{article}
\pdfoutput=1

\usepackage{amstext,amsmath,amssymb,amsthm}
\usepackage{hyperref}

\newcommand{\defeq}{\mathrel{:=}}
\newcommand{\states}{\mathcal{S}}
\newcommand{\B}[2]{\mathsf{B}_{#1}^{#2}}
\newcommand{\K}[2]{\mathsf{K}_{#1}^{#2}}

\newtheorem{definition}{Definition}
\newtheorem{fact}{Fact}
\newtheorem{proposition}{Proposition}
\newtheorem{theorem}{Theorem}

\newenvironment{myproof}{\begin{proof}}{\end{proof}}

\pagestyle{plain}
\begin{document}
\title{Parametric Constructive Kripke-Semantics for Standard Multi-Agent Belief and Knowledge\\
		(Knowledge As Unbiased Belief)}
\author{%
	\begin{tabular}{@{}c@{}}
		Simon Kramer\thanks{funded with 
			Grant AFR~894328 from the National Research Fund Luxembourg  
				cofunded under the Marie-Curie Actions of the European Commission (FP7-COFUND)}\\[\jot] 
		\normalsize University of Luxembourg\\
		\normalsize \texttt{simon.kramer@a3.epfl.ch}
	\end{tabular} 
	\and 
	\begin{tabular}{@{}c@{}}	
		Joshua Sack\\[\jot]
		\normalsize California State University Long Beach\\
		\normalsize \texttt{joshua.sack@gmail.com}
	\end{tabular}}
\maketitle

\begin{abstract}
	We propose 
		\emph{parametric constructive Kripke-semantics} for 
			multi-agent KD45-belief and S5-knowledge in terms of 
				elementary set-theoretic constructions of 
					two basic functional building blocks, namely 
						\emph{bias} (or \emph{viewpoint}) and \emph{visibility,} functioning also as 
							the parameters of the doxastic and epistemic accessibility relation.
	The doxastic accessibility relates two possible worlds whenever 
		the application of the 
			composition of bias with visibility to the first world is equal to
		the application of visibility to the second world.
	The epistemic accessibility is 
		the transitive closure of 
			the union of 
				our doxastic accessibility and 
				its converse.
	Therefrom, accessibility relations for 
		\emph{common} and \emph{distributed} belief and knowledge can be constructed in a standard way.
	As a result, 
		we obtain \emph{a general definition of knowledge in terms of belief} that  
			enables us to view 
				S5-knowledge as accurate (unbiased and thus true) KD45-belief, 
				negation-complete belief and knowledge as 
					exact KD45-belief and S5-knowledge, respectively, and 
				perfect S5-knowledge as precise (exact and accurate) KD45-belief, and all this 
					\emph{generically} for arbitrary functions of bias and visibility.
	Our results can be seen as a semantic complement to 
		previous foundational results by Halpern et al.\ about 
			the (un)definability and (non-)reducibility of knowledge in terms of and to belief, respectively.

	
	\medskip
	
	\noindent
	\textbf{Keywords:}
		parametric constructive Kripke-semantics,  
		doxastic \& epistemic logic,
		knowledge as a form of belief,
		multi-agent distributed systems,
		relational semantics of modal logic.
\end{abstract}

\section{Introduction}
In \cite{KnowledgeAsBelief},
	the problem of defining knowledge in terms of belief is studied from
		a modal logic perspective, where
	the authors show that
		``if knowledge satisfies any set of axioms contained in S5, 
			then it cannot be explicitly defined in terms of belief. 
			S5 knowledge can be implicitly defined by belief, but not reduced to it.''
Thereby, 
	the standard notions of explicit and implicit definability from first-order logic are 
		``lifted to the definability of modalities in modal logics in a straightforward way,''
	so that 
	``explicit definability is equivalent to the combination of implicit definability and reducibility.''
More precisely, \cite{KnowledgeAsBelief}:
\begin{quotation}
	Consider a logic $\Lambda$ for knowledge and belief. 
	Knowledge is explicitly defined in $\Lambda$ if 
	there is a formula $\mathrm{DK}$ (for ``definition of knowledge'') in $\Lambda$ of the form 
	$Kp\leftrightarrow\delta$, where $\delta$ is a formula that does not mention the knowledge operator.
	Knowledge is implicitly defined in $\Lambda$ if, roughly speaking, $\Lambda$ ``determines'' knowledge
	uniquely. Syntactically, this determination means that any two modal operators for knowledge that 
	satisfy $\Lambda$ must be equivalent. Semantically, this means that two Kripke models of $\Lambda$ 
	with the same set of worlds that agree on the interpretation of belief (and on the interpretations 
	of all primitive propositions) must agree also on the interpretation of knowledge.
\end{quotation}
Our contribution is to make 
	the definability of S5-knowledge in terms of 
		KD45-belief \emph{function-parametric} as well as \emph{semantic-constructive} 
			(cf.\ Definition~\ref{definition:FunctionParametricDoxasticAccessibility}--\ref{definition:FPDEL}).
More precisely,
	we propose 
		function-parametric constructive Kripke-semantics for 
			multi-agent KD45-belief and S5-knowledge in terms of 
				elementary set-theoretic constructions of 
					two basic functional building blocks, namely (cf.\ Definition~\ref{definition:DEFP}):  
					\begin{itemize}
						\item \emph{bias,} or viewpoint translocation (necessarily idempotent, e.g., 
								the constant functions), and
						\item \emph{visibility} transformation, for example: 
							\begin{itemize}
								\item point confounding (non-injective when non-trivial), and/or 
								\item point confusing (or permuting, bijective on a sub-domain)
							\end{itemize}
					\end{itemize}
			functioning also as the parameters of the doxastic and epistemic accessibility relation.
	Note that we mean ``\emph{set-theoretically} constructive,'' not ``intuitionistic,''
		in loose analogy with 
							the set-theoretically constructive rather than 
							the purely axiomatic definition of 
								numbers \cite{TheNumberSystems} or 
								ordered pairs.\footnote{E.g., the now standard definition by Kuratowski 
									or other well-known definitions \cite{NotesOnSetTheory}.}
	That is:
			\begin{itemize}
				\item our \emph{epistemic} accessibility is   
				the transitive closure of 
					the union of 
						our doxastic accessibility and 
						its converse  
							(cf.\ Definition~\ref{definition:FunctionParametricEpistemicAccessibility}); 
				\item our \emph{doxastic} accessibility relates two possible worlds whenever 
						the application of the 
							composition of bias with visibility to the first world is equal to
							the application of visibility to the second world
								(cf.\ Definition~\ref{definition:FunctionParametricDoxasticAccessibility}).
			\end{itemize}
	As a result, our constructions enable us to view 
		\begin{itemize}
			\item S5-knowledge as accurate (unbiased and thus true) KD45-belief, 
			\item \emph{negation-complete} belief and knowledge as 
					exact KD45-belief and S5-knowledge, respectively, 
			\item \emph{perfect} S5-knowledge as precise (exact and accurate) KD45-belief, and
		\end{itemize}
	all this 
					\emph{generically} for arbitrary functions of bias and visibility in our sense 
						(cf.\ Theorem~\ref{theorem:DoxasticEpistemic}).
	In comparison, 
		recall from \cite{Epistemic_Logic} the by-now classic constructive definition of 
			agent-centric (say in agent $a$) epistemic accessibility $\equiv_{a}$ as state (say $s$ and $s'$) indistinguishability $s\equiv_{a}s'$ defined in terms of 
				the equality between  
					the projection $\pi_{a}(s)$ of $s$ onto $a$'s view and 
					the projection $\pi_{a}(s'):$
	\begin{definition}[Epistemic accessibility as state indistinguishability \cite{Epistemic_Logic}]\label{definition:EAasI}
		$$\text{$s\equiv_{a}s'$ by definition, if and only if $\pi_{a}(s)=\pi_{a}(s')$.}$$
	\end{definition}
	Thus $\equiv_{a}$ is defined to be the kernel of $\pi_{a}$ \cite{DataMining}.
	This definition is constructive in the sense that
		it not merely abstractly stipulates $\equiv_{a}$
			to be an equivalence relation (that would be the standard modal-logical methodology
				\cite{MultiAgents}) but 
				it actually concretely constructs $\equiv_{a}$ in terms of 
					the set-theoretic building block $\pi_{a}$, a projection function
						(state visibility as state projection), which
						\emph{forces} $\equiv_{a}$ to be an equivalence relation.
	(For more examples of more complex, constructive definitions of agent-centric accessibility relations, 
		see \cite{LiP,LiiP,LDiiP}.)
	It is not at all obvious how to recover a definition of doxastic accessibility from this 
		indistinguishability definition of epistemic accessibility.
Nevertheless we present a simple, generic, and thus general solution to this important problem.
Our solution is general in the sense that 
	the extent to which it can be applied is 
		the entire semantic scope (models) of standard doxastic and epistemic logic 
			(cf.\ \cite{MultiAgents} and \cite{EpistemicLogicFiveQuestions} for overviews), 
				thanks to our soundness and completeness results 
					in the sense of Theorem~\ref{theorem:KD45} and \ref{theorem:S5}.
Moreover, our proofs for the solution are simple, which increases its value.
Here, the difficulty was to \emph{find} our general definition of knowledge in terms of belief, which
	has even the feature of being generic thanks to its function parameters.
Our findings can be seen as a semantic complement to 
		previous foundational results by Halpern et al.\ about 
			the (un)definability and (non-)reducibility of knowledge in terms of and to belief, respectively.

\section{Parametric constructions and results}
Let 
	\begin{itemize}
	\item $\states$ designate a set of 
		system states in computer-science, 
		points in modal-logical, or  
		possible worlds in philosophical terminology;
	\item $\mathrm{id}_{\states'}\defeq\{(s,s)\mid s\in\states'\}$ the identity function on $\states'\subseteq\states\,;$ 
	\item $\mathrm{Im}(R)\defeq\{s'\in\states\mid\text{there is $s\in\states$ such that $s\mathrel{R}s'$}\}$ the image of 
	some (possibly functional) relation $R\subseteq\states\times\states\,.$ 
\end{itemize}
Further let ``:iff'' abbreviate ``by definition, if and only if''.
\begin{definition}[Doxastic-epistemic function pair]\label{definition:DEFP}
Two functions 
	$f:\states\rightarrow\states$ and 
	$g:\mathrm{Im}(f)\rightarrow\mathrm{Im}(f)$ form  
	a \emph{doxastic-epistemic function pair $(f,g)$ on $\states$} :iff 
\begin{itemize}
	\item for all $s,s'\in\states$, if $g(f(s))=f(s')$ then $g(f(s'))=f(s')\,;$
	\item \textbf{or, equivalently,} $g$ is idempotent, i.e., $g\circ g=g\,.$ 
\end{itemize}
\end{definition}
\begin{fact}\label{fact:DEFPCOne}
	For all 
		doxastic-epistemic function pairs $(f,g)$ on $\states$ and 
		$s\in\states$ there is $s'\in\states$ such that $g(f(s))=f(s')\,.$
\end{fact}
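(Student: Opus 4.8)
The plan is to read the conclusion straight off the codomain of $g$; in particular, I would not need the idempotence condition from Definition~\ref{definition:DEFP} at all, since the statement holds for any pair of functions $f:\states\rightarrow\states$ and $g:\mathrm{Im}(f)\rightarrow\mathrm{Im}(f)$. The first — and really the only — thing that requires a moment's care is well-definedness: for an arbitrary $s\in\states$ the value $f(s)$ lies in $\mathrm{Im}(f)$, which is precisely the domain of $g$, so the expression $g(f(s))$ is legitimate.

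Having secured that, I would invoke the codomain of $g$: because $g$ maps $\mathrm{Im}(f)$ into $\mathrm{Im}(f)$, we obtain $g(f(s))\in\mathrm{Im}(f)$. Finally, I would unfold the definition $\mathrm{Im}(f)=\{s'\in\states\mid\text{there is }s\in\states\text{ such that }s\mathrel{f}s'\}$, i.e.\ the set of all values $f(s')$ with $s'\in\states$. Then the membership $g(f(s))\in\mathrm{Im}(f)$ says exactly that there is some $s'\in\states$ with $f(s')=g(f(s))$, which is the witness demanded by the statement.

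I expect no genuine obstacle here: the claim is a pure bookkeeping consequence of the design choice, in Definition~\ref{definition:DEFP}, to give $g$ the codomain $\mathrm{Im}(f)$ rather than $\states$. The single place where a careless argument could go wrong is omitting the justification that $g(f(s))$ is defined in the first place, i.e.\ that $f(s)\in\mathrm{Im}(f)=\mathrm{dom}(g)$; once that is noted, the existence of $s'$ is immediate from the definition of $\mathrm{Im}$.
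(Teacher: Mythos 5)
Your proof is correct and is essentially the paper's own argument: the paper likewise justifies the fact purely from $f$ being total on $\states$, $g$ being total on $\mathrm{Im}(f)$, and $\mathrm{Im}(g)\subseteq\mathrm{Im}(f)$, so that $g(f(s))$ is defined and is a value of $f$. Your observation that idempotence is not needed is also consistent with the paper, whose proof never invokes it.
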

\begin{myproof}
	By the definitional fact that 
		$f$ is a totally defined operation on $\states$ and 
		$g$ is a totally defined operation on $\mathrm{Im}(f)$;  
		($\mathrm{Im}(g)\subseteq\mathrm{Im}(f)$).
\end{myproof}
We shall use 
	the two constraints in Definition~\ref{definition:DEFP} interchangeably; 
the two constraints are indeed equivalent, as asserts the following proposition.
\begin{proposition}
	The two alternative constraints in Definition~\ref{definition:DEFP} are equivalent.
\end{proposition}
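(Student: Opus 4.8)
The plan is to establish the two implications separately, with Fact~\ref{fact:DEFPCOne} serving as the bridge that lets us rewrite any value $g(f(s))$ in the form $f(s')$ for a suitable $s'$.

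First I would show that idempotence of $g$ implies the first constraint. Assuming $g\circ g = g$, suppose $s,s'\in\states$ satisfy $g(f(s)) = f(s')$. Applying $g$ to both sides and simplifying the left-hand side by idempotence yields $g(f(s)) = g(f(s'))$, and combining this with the hypothesis $g(f(s)) = f(s')$ gives $f(s') = g(f(s'))$, which is exactly the consequent of the first constraint. This direction is a one-line chase of equalities.

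For the converse, I would assume the first constraint and prove $g(g(x)) = g(x)$ for an arbitrary $x\in\mathrm{Im}(f)$. Write $x = f(s)$ for some $s\in\states$. Since $\mathrm{Im}(g)\subseteq\mathrm{Im}(f)$ (equivalently, by Fact~\ref{fact:DEFPCOne}), there is $s'\in\states$ with $g(f(s)) = f(s')$. The first constraint then gives $g(f(s')) = f(s')$, and substituting $f(s') = g(f(s)) = g(x)$ back into this equality turns it into $g(g(x)) = g(x)$. As $x$ was arbitrary in $\mathrm{Im}(f)$, the domain of $g$, we conclude $g\circ g = g$.

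The only step needing a moment's attention is the observation that $g(f(s))$ does lie in $\mathrm{Im}(f)$, so that it can legitimately be written as $f(s')$ and the first constraint applied to it; this is precisely the content of Fact~\ref{fact:DEFPCOne}. Beyond that bookkeeping point, the argument is routine and I anticipate no real obstacle.
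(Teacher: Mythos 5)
Your proof is correct and follows essentially the same route as the paper's: both directions hinge on applying $g$ to the hypothesis $g(f(s))=f(s')$ and using Fact~\ref{fact:DEFPCOne} to produce such an $s'$ when needed, and your explicit remark that idempotence need only be verified on $\mathrm{Im}(f)$ (the domain of $g$) is a point the paper leaves implicit but uses in the same way. No issues.
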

\begin{myproof}
	Let 
		$f:\states\rightarrow\states$ and 
		$g:\mathrm{Im}(f)\rightarrow\mathrm{Im}(f)$.
	For the if-direction, 
		suppose that for all $s,s'\in\states$, 
			if $g(f(s))=f(s')$ then $g(f(s'))=f(s')$.
	Further let $s\in\states$.
	By Fact~\ref{fact:DEFPCOne}, 
		there is $s'\in\states$ such that $g(f(s))=f(s')$.
	Hence $g(f(s))=g(f(s'))$, and also 
		$g(g(f(s)))=g(f(s'))$.
	Hence $g(g(f(s)))=g(f(s))$.
	For the only-if-direction,
		suppose that $g\circ g=g$, and 
		let $s,s'\in\states$.
	Further suppose that $g(f(s))=f(s')$.
	Hence $g(g(f(s)))=g(f(s'))$.
	Hence $g(f(s))=g(f(s'))$ by the idempotency of $g$.
	Hence $g(f(s'))=f(s')$ by the last supposition.
\end{myproof}

\begin{definition}[Function-Parametric \emph{Doxastic} Accessibility]\label{definition:FunctionParametricDoxasticAccessibility}
Let $(f,g)$ designate a doxastic-epistemic function pair on $\states$.
Then we define 
	our \emph{$(f,g)$-parametric \textbf{doxastic} accessibility relation} ${\mathrm{D}_{f}^{g}}\subseteq\states\times\states$ such that 
for all $s,s'\in\states$,
	$$\boxed{\text{$s\mathrel{\mathrm{D}_{f}^{g}}s'$ :iff\ \ $g(f(s))=f(s')\,.$}}$$
\end{definition}
The following main \emph{adequacy theorem} asserts 
	first that for all doxa\-stic-epistemic function pairs $(f,g)$, 
		$\mathrm{D}_{f}^{g}$ is indeed a standard doxastic accessibility relation, and 
	second that for all standard doxastic accessibility relations $R$, 
		a doxastic-epistemic function pair $(f,g)$ can be constructed such that 
			$R=\mathrm{D}_{f}^{g}$.
\begin{theorem}[The \textbf{KD45} Accessibility Schema]\label{theorem:KD45}\ 
\begin{enumerate}
	\item \textbf{Soundness:} If $(f,g)$ is a doxastic-epistemic function pair on $\states$
								then for all $s\in\states:$
	\begin{enumerate}
		\item there is $s'\in\states$ such that $s\mathrel{\mathrm{D}_{f}^{g}}s'$\quad(Seriality/Totality)
		\item for all $s',s''\in\states:$ 
			\begin{enumerate}
				\item if $s\mathrel{\mathrm{D}_{f}^{g}}s'$ and $s'\mathrel{\mathrm{D}_{f}^{g}}s''$
						then $s\mathrel{\mathrm{D}_{f}^{g}}s''$\quad(Transitivity)
				\item if $s\mathrel{\mathrm{D}_{f}^{g}}s'$ and $s\mathrel{\mathrm{D}_{f}^{g}}s''$
						then $s'\mathrel{\mathrm{D}_{f}^{g}}s''$\quad(Euclideanness)
			\end{enumerate}
	\end{enumerate}
	\item \textbf{Completeness:} If $\emptyset\neq R\subseteq\states\times\states$ is serial, transitive, and Euclidean  
		then there is a doxastic-epistemic function pair $(f,g)$ on $\states$ such that 
			$(f,g)$ is constructible from $R$ and 
			$R=\mathrm{D}_f^g\,.$
\end{enumerate}
\end{theorem}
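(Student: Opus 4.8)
\medskip
\noindent\textbf{Proof proposal.}
The plan is to handle the two directions separately, with essentially all of the substance in the completeness part. \emph{Soundness} is purely mechanical: Totality/Seriality is Fact~\ref{fact:DEFPCOne} restated (given $s$ it supplies $s'$ with $g(f(s))=f(s')$, i.e.\ $s\mathrel{\mathrm{D}_{f}^{g}}s'$); for Transitivity, from $g(f(s))=f(s')$ and $g(f(s'))=f(s'')$ I would apply $g$ to the first equation and use idempotency to get $f(s'')=g(f(s'))=g(g(f(s)))=g(f(s))$; for Euclideanness, from $g(f(s))=f(s')$ and $g(f(s))=f(s'')$, applying $g$ to the first and idempotency gives $g(f(s'))=g(g(f(s)))=g(f(s))=f(s'')$. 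So soundness is three one-line computations using only $g\circ g=g$ and Fact~\ref{fact:DEFPCOne}.

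For \emph{completeness}, the work is to read off the shape of a serial, transitive, Euclidean $R$. I would first establish three structural facts:
\begin{enumerate}
	\item $\mathrm{Im}(R)=W$, where $W\defeq\{t\in\states\mid t\mathrel{R}t\}$. ($\subseteq$: if $s\mathrel{R}t$, then Euclideanness on $s\mathrel{R}t$ and $s\mathrel{R}t$ yields $t\mathrel{R}t$; $\supseteq$ is immediate.)
	\item The restriction of $R$ to $W$ is an equivalence relation: reflexive on $W$ by definition of $W$, symmetric and transitive by Euclideanness and transitivity of $R$. Write $\{C_j\}_{j\in J}$ for the induced partition of $W$ into clusters, and fix a representative $r_j\in C_j$ for each $j$.
	\item For every $s\in\states$ the successor set $R(s)\defeq\{t\mid s\mathrel{R}t\}$ equals exactly one cluster, which I denote $C_{j(s)}$: it is nonempty by seriality, so pick $t_0\in R(s)$; then $t_0\in W$ by fact~1, and writing $C$ for its cluster, $R(s)\subseteq C$ follows from Euclideanness ($s\mathrel{R}t_0$ and $s\mathrel{R}t$ give $t_0\mathrel{R}t$) and $C\subseteq R(s)$ from transitivity ($s\mathrel{R}t_0$ and $t_0\mathrel{R}t$ give $s\mathrel{R}t$).
\end{enumerate}

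Given fact~3 the construction is essentially forced. I would define $f:\states\rightarrow\states$ by $f(s)=r_{j(s)}$ if $s\in W$ and $f(s)=s$ if $s\notin W$, so that $\mathrm{Im}(f)=\{r_j\mid j\in J\}\cup(\states\setminus W)$, a disjoint union because each $r_j\in W$; and $g:\mathrm{Im}(f)\rightarrow\mathrm{Im}(f)$ by $g(r_j)=r_j$ and $g(s)=r_{j(s)}$ for $s\in\states\setminus W$. Every value of $g$ is a representative and $g$ fixes representatives, so $g\circ g=g$ and $(f,g)$ is a doxastic-epistemic function pair, evidently constructed from $R$. It then remains to check that $g(f(s))=r_{j(s)}$ for every $s$ (both cases of the definition of $f$), whence $g(f(s))=f(s')$ iff $f(s')=r_{j(s)}$ iff $s'\in C_{j(s)}$ (using $r_j\notin\states\setminus W$ and that distinct clusters have distinct representatives) iff $s'\in R(s)$ iff $s\mathrel{R}s'$; that is, $\mathrm{D}_{f}^{g}=R$.

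The only genuinely non-formulaic step is the structural picture of serial, transitive, Euclidean relations, i.e.\ facts~1--3 --- in particular fact~3, that each state's successor set is a single cluster --- and being careful to invoke each of the three hypotheses on $R$ exactly where it is needed (Euclideanness for ``$\mathrm{Im}(R)$ consists of the reflexive points'' and for ``$R(s)$ lies in one cluster'', transitivity for ``a cluster lies in $R(s)$'', seriality for ``$R(s)\neq\emptyset$''). After that, the definitions of $f$ and $g$ and the verification $\mathrm{D}_{f}^{g}=R$ are routine bookkeeping.
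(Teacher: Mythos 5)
Your proposal is correct and follows essentially the same route as the paper: the completeness construction (collapse $\mathrm{Im}(R)$ — which you correctly identify with the reflexive points — onto chosen cluster representatives via $f$, fix everything else, and let $g$ fix representatives while sending each non-image point to the representative of its successor cluster) is the paper's construction, and the soundness computations differ only in using idempotency of $g$ directly where the paper uses the equivalent first definitional constraint. Your explicit check that $g$ is idempotent, so that $(f,g)$ really is a doxastic-epistemic function pair, is a small point the paper leaves implicit.
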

\begin{myproof}
	For soundness, assume that 
		$(f,g)$ is a doxastic-epistemic 
		function pair on $\states$.
	Then 1.a holds by Fact~\ref{fact:DEFPCOne}.
	For 1.b, 
		let $s,s',s''\in\states$ and 
		suppose that $g(f(s))=f(s')$.
	Hence $g(f(s'))=f(s')$ by the first alternative definitional constraint on $f$ and $g$.
	For 1.b.i, 
		further derive that $g(f(s))=g(f(s'))$ by 
			the last supposition, and then 
		further suppose that $g(f(s'))=f(s'')$.
	Hence $g(f(s))=f(s'')$. 
	For 1.b.ii, 
		suppose that $g(f(s))=f(s'')$.
	Consequently, 
		$f(s')=f(s'')$ by the last supposition and the first supposition of 1.b, and then 
		$g(f(s'))=f(s'')$ by the very first derivation.
		
	For completeness,
		let $\emptyset\neq R\subseteq\states\times\states$ and 
		suppose that $R$ is serial, transitive, and Euclidean.
Then ${\equiv}\defeq\{(s,s')\in R\mid s\in\mathrm{Im}(R)\}\subseteq R$ is an equivalence relation 
(i.e., a relation that is reflexive, transitive, and Euclidean) on $\mathrm{Im}(R):$
\begin{itemize}
	\item $\equiv$ is reflexive: 
		Let $s\in\mathrm{Im}(R)$, i.e.,
		there is $s'\in\states$ such that $s'\mathrel{R}s$.
		Thus $s\in\states$.
		Hence there is $s''\in\states$ such that $s\mathrel{R}s''$ by the seriality of $R$.
		Hence $s'\mathrel{R}s''$ by the transitivity of $R$.
		Hence $s''\mathrel{R}s$ by the Euclideanness of $R$.
		Hence $s\mathrel{R}s$ by the transitivity of $R$.
		And since $s\in\mathrm{Im}(R)$, $s\equiv s$.
	\item $\equiv$ is transitive and Euclidean by inheritance, i.e., simply because $R$ is.
\end{itemize}
For each $s\in \mathrm{Im}(R)$, 
	choose $c_s\in[s]_{\equiv}$ such that 
		for all $s'\in\states$, 
			if $[s]_\equiv = [s']_\equiv$ then $c_s=c_{s'}.$
Observe that 
	for all $s',s''\in\states$,
		if $s\mathrel{R}s'$ and $s\mathrel{R}s''$
		then there is $C\in\mathrm{Im}(R)/_{\equiv}$ such that 
			$s',s''\in C$.
This is 
	because $s'\mathrel{R}s''$ by the Euclideanness of $R$ and 
	because $s',s''\in\mathrm{Im}(R)$ (thus $s'\equiv s''$).
Now define two functions $f:\states\to \states$ and $g:\mathrm{Im}(f)\to\mathrm{Im}(f)$ such that:
	\begin{align*}
		f:s&\mapsto \left\{\begin{array}{ll}
						c_s & \mbox{if $s\in \mathrm{Im}(R)\,,$}\\
						s & \mbox{if $s\not\in \mathrm{Im}(R)\,;$}
					\end{array}\right . \\
		g:s&\mapsto \left\{\begin{array}{ll}
						s & \mbox{if $s\in \mathrm{Im}(R)$\,,}\\
						c_{s'} & \mbox{if $s\not\in \mathrm{Im}(R)$ and $s\mathrel{R}s'$.}
					\end{array}\right .
	\end{align*}
Notice that 
	$g$ is well defined, i.e., it does not matter which $s'$ we choose, since
		for all $s''\in\states$,
			if $s\mathrel{R}s'$ and $s\mathrel{R}s''$ 
			then there is $C\in\mathrm{Im}(R)/_{\equiv}$ such that $s',s''\in C$.

We will now see that $R = \mathrm{D}_f^g$.
So let $s,s'\in\states$.
\begin{itemize}
	\item Suppose that $s\mathrel{R}s'$. Thus $s'\in\mathrm{Im}(R)$. Hence $f(s') = c_{s'}$.
		\begin{itemize}
			\item Suppose that $s\in \mathrm{Im}(R)$. 
					Hence  
						$s\equiv s'$ by the first two suppositions (thus $c_s = c_{s'}$), and 
						$g(f(s)) = g(c_s) = c_s$.
					Hence $g(f(s))=f(s')$.
			\item Now suppose that $s\not\in \mathrm{Im}(R)$.
					Hence $g(f(s)) = g(s) = c_{s''}$ for some $s''\in\states$ such that $s\mathrel{R}s''$.
					Hence $s'\mathrel{R}s''$ by the Euclideanness of $R$, and
					then $s'\equiv s''$.
					Thus $c_{s'}=c_{s''}$.
					Hence $g(f(s)) = f(s')$.
		\end{itemize}
	\item Conversely suppose that $g(f(s)) = f(s')$.
			Notice in the definition of $f$ and $g$ 
				that $\mathrm{Im}(g)\subseteq\mathrm{Im}(R)$ and 
				that $s'\in \mathrm{Im}(R)$ if and only if 
						$f(s')\in \mathrm{Im}(R)$. 
				Hence $s'\in \mathrm{Im}(R)$, and thus $f(s') = c_{s'}$.
				Hence $g(f(s)) = c_{s'}$.
		\begin{itemize}
			\item Suppose that $s\in \mathrm{Im}(R)$.
					Hence $g(f(s)) = g(c_s) = c_s$.
					Hence $c_s= c_{s'}$ and
					thus $s\equiv s'$.
					Hence $s\mathrel{R}s'$.
			\item Now suppose that $s\not\in \mathrm{Im}(R)$.
					Hence 
						$g(f(s)) = g(s) = c_{s''}$ for some $s''\in\states$ such that 
						$s\mathrel{R}s''$.
					Hence $c_{s''}= c_{s'}$, and
					thus $s''\equiv s'$.
					Hence $s''\mathrel{R}s'$.
					Hence $s\mathrel{R}s'$ by the transitivity of $R$.
		\end{itemize}
\end{itemize}
\end{myproof}
The following proposition gives 
	a functional characterisation of (i.e., a necessary and sufficient equational condition for)  
		the symmetry (and hence the property of being an equivalence relation) of $(f,g)$-parametric doxastic accessibility relations.
(Seriality, symmetry, and transitivity jointly imply reflexivity.) 
\begin{proposition}[Doxastic symmetry characterisation]\label{proposition:DoxasticSymmetry}
	For all $(f,g)$-parametric doxastic accessibility relations 
		$\mathrm{D}_{f}^{g}\subseteq\states\times\states$, 
	$$\text{$\mathrm{D}_{f}^{g}=(\mathrm{D}_{f}^{g})^{-1}$ if and only if 
		$g=\mathrm{id}_{\mathrm{Im}(f)}\,.$}$$
\end{proposition}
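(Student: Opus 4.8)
The plan is to prove the two implications separately, in each case simply unfolding the definition $s\mathrel{\mathrm{D}_{f}^{g}}s'$ :iff $g(f(s))=f(s')$ and invoking Fact~\ref{fact:DEFPCOne} together with the idempotency of $g$ guaranteed by Definition~\ref{definition:DEFP}.

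For the if-direction I would assume $g=\mathrm{id}_{\mathrm{Im}(f)}$. Since $f(s)\in\mathrm{Im}(f)$ for every $s\in\states$, the defining condition $g(f(s))=f(s')$ then collapses to $f(s)=f(s')$, which is manifestly symmetric in $s$ and $s'$. Hence $s\mathrel{\mathrm{D}_{f}^{g}}s'$ iff $s'\mathrel{\mathrm{D}_{f}^{g}}s$, i.e., $\mathrm{D}_{f}^{g}=(\mathrm{D}_{f}^{g})^{-1}$.

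For the only-if-direction I would assume $\mathrm{D}_{f}^{g}=(\mathrm{D}_{f}^{g})^{-1}$ and fix an arbitrary $t\in\mathrm{Im}(f)$, say $t=f(s)$ for some $s\in\states$. By Fact~\ref{fact:DEFPCOne} there is $s'\in\states$ with $g(f(s))=f(s')$, i.e., $s\mathrel{\mathrm{D}_{f}^{g}}s'$. Symmetry yields $s'\mathrel{\mathrm{D}_{f}^{g}}s$, i.e., $g(f(s'))=f(s)=t$. Substituting $f(s')=g(f(s))=g(t)$ and using idempotency of $g$ gives $g(t)=g(g(t))=g(f(s'))=t$. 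As $t$ was an arbitrary element of $\mathrm{Im}(f)$, this establishes $g=\mathrm{id}_{\mathrm{Im}(f)}$.

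The only step requiring genuine care — and the one place where the doxastic-epistemic-pair hypothesis is used essentially rather than cosmetically — is that last substitution: from seriality and symmetry alone one recovers merely $g(g(t))=t$, and it is precisely the idempotency of $g$ that upgrades this to $g(t)=t$. Everything else is routine unfolding of definitions, so I expect no further obstacle.
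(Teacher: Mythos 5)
Your proposal is correct and follows essentially the same route as the paper: the if-direction by collapsing the defining condition to the symmetric equation $f(s)=f(s')$, and the only-if-direction by combining Fact~\ref{fact:DEFPCOne}, the symmetry assumption, and the idempotency of $g$ to conclude $g(f(s))=f(s)$ for every $s$. Your closing observation that idempotency is what upgrades $g(g(t))=t$ to $g(t)=t$ is exactly the key step in the paper's argument as well.
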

\begin{proof}
	The if-direction is immediate.
	For the only-if-direction,
		suppose that $\mathrm{D}_{f}^{g}=(\mathrm{D}_{f}^{g})^{-1}$, i.e., 
			for all $s,s'\in\states$,
				$g(f(s))=f(s')$ if and only if $g(f(s'))=f(s)$.
	Further, let $s\in\states$.
	Hence there is $s'\in\states$ such that $g(f(s))=f(s')$ by Fact~\ref{fact:DEFPCOne}.
	Hence 
		$g(f(s'))=f(s)$ by the first supposition, and also
		$g(g(f(s)))=g(f(s'))$. 
	Hence, 
		$g(f(s))=g(f(s'))$ by the idempotency of $g$, and then
		$g(f(s))=f(s)$.
\end{proof}
The following is our general definition of knowledge in terms of belief.
\begin{definition}[Function-Parametric \emph{Epistemic} Accessibility]
\label{definition:FunctionParametricEpistemicAccessibility}
Let $\mathrm{D}_{f}^{g}$ designate an $(f,g)$-parametric \emph{doxastic} accessibility. 
Then we define our \emph{$(f,g)$-parametric \textbf{epistemic} accessibility relation} ${\mathrm{E}_{f}^{g}}\subseteq\states\times\states$ such that 
		$$\boxed{{\mathrel{\mathrm{E}_{f}^{g}}}\defeq  
			({\mathrel{\mathrm{D}_{f}^{g}}}\cup 
			(\mathrel{\mathrm{D}_{f}^{g}})^{-1})^{+},}$$
	where 
		`$^{-1}$' designates the converse and 
		`$^{+}$' the transitive-closure operation.
\end{definition}
The following \emph{adequacy theorem} asserts 
	first that for all doxastic-epistemic function pairs $(f,g)$, 
		$\mathrel{\mathrm{E}_{f}^{g}}$ is indeed a standard epistemic accessibility relation, and 
	second that for all standard epistemic accessibility relations $\equiv$, 
		a doxastic-epistemic function pair $(f,g)$ can be constructed such that ${\equiv}=\mathrm{E}_{f}^{g}$.
\begin{theorem}[The \textbf{S5} Accessibility Schema]\label{theorem:S5}\ 
\begin{enumerate}
	\item \textbf{Soundness:} If $\mathrm{D}_{f}^{g}$ is an $(f,g)$-parametric accessibility relation 
		then $\mathrm{E}_{f}^{g}$ is the smallest equivalence relation containing $\mathrm{D}_{f}^{g}$.
	\item \textbf{Completeness:} If $\emptyset\neq{\equiv}\subseteq\states\times\states$ is an equivalence relation
		then there is a doxastic-epistemic function pair $(f,g)$ on $\states$ such that 
			$(f,g)$ is constructible from $\equiv$ and 
			${\equiv}=\mathrm{E}_f^g\,.$ 
\end{enumerate}
\end{theorem}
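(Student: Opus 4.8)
The plan is to obtain both halves as short corollaries of Theorem~\ref{theorem:KD45}, Proposition~\ref{proposition:DoxasticSymmetry}, and the elementary fact that the transitive closure of a symmetric relation is again symmetric.

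For soundness, suppose $\mathrm{D}_{f}^{g}$ is an $(f,g)$-parametric doxastic accessibility relation for a doxastic-epistemic function pair $(f,g)$ on $\states$. I would check that $\mathrm{E}_{f}^{g}=(\mathrm{D}_{f}^{g}\cup(\mathrm{D}_{f}^{g})^{-1})^{+}$ is an equivalence relation by verifying its three defining properties in turn. Transitivity is immediate, since $\mathrm{E}_{f}^{g}$ is by definition a transitive closure. Symmetry follows because $\mathrm{D}_{f}^{g}\cup(\mathrm{D}_{f}^{g})^{-1}$ is symmetric and the reversal of a finite path in a symmetric relation is again such a path. Reflexivity is where the KD45 side is used: by the seriality clause~1.a of Theorem~\ref{theorem:KD45}, for every $s\in\states$ there is $s'\in\states$ with $s\mathrel{\mathrm{D}_{f}^{g}}s'$, whence $s\mathrel{\mathrm{D}_{f}^{g}}s'$ and $s'\mathrel{(\mathrm{D}_{f}^{g})^{-1}}s$ form a length-two path that witnesses $s\mathrel{\mathrm{E}_{f}^{g}}s$. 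For minimality, note $\mathrm{D}_{f}^{g}\subseteq\mathrm{E}_{f}^{g}$ trivially, and if $Q$ is any equivalence relation with $\mathrm{D}_{f}^{g}\subseteq Q$, then $(\mathrm{D}_{f}^{g})^{-1}\subseteq Q^{-1}=Q$ by symmetry of $Q$, so $\mathrm{D}_{f}^{g}\cup(\mathrm{D}_{f}^{g})^{-1}\subseteq Q$, and then $\mathrm{E}_{f}^{g}\subseteq Q$ by transitivity of $Q$.

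For completeness, let $\emptyset\neq{\equiv}\subseteq\states\times\states$ be an equivalence relation (reflexive on $\states$, symmetric, transitive). Then $\equiv$ is serial (by reflexivity), transitive, and Euclidean (symmetry together with transitivity), so Theorem~\ref{theorem:KD45}(2) provides a doxastic-epistemic function pair $(f,g)$ on $\states$, constructible from $\equiv$, with ${\equiv}=\mathrm{D}_{f}^{g}$. Since $\equiv$ is symmetric we have $\mathrm{D}_{f}^{g}=(\mathrm{D}_{f}^{g})^{-1}$ (equivalently, by Proposition~\ref{proposition:DoxasticSymmetry}, $g=\mathrm{id}_{\mathrm{Im}(f)}$, though only the former is needed), hence $\mathrm{E}_{f}^{g}=(\mathrm{D}_{f}^{g}\cup(\mathrm{D}_{f}^{g})^{-1})^{+}={\equiv}^{+}={\equiv}$, the last step because $\equiv$ is already transitive. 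So this $(f,g)$ is as required.

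I do not anticipate a real obstacle: the argument is essentially bookkeeping on top of Theorem~\ref{theorem:KD45}. The one point that deserves a word of care is the intended meaning of ``equivalence relation'' --- I read it as reflexive on all of $\states$, which is exactly what is needed both for $\equiv$ to be serial (so that Theorem~\ref{theorem:KD45}(2) applies verbatim) and for it to be comparable with $\mathrm{E}_{f}^{g}$, which soundness forces to be reflexive on $\states$. The only genuinely mathematical ingredient beyond earlier results is the preservation of symmetry under transitive closure, which is routine.
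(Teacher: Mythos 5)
Your proof is correct, and the soundness half is essentially the paper's own argument: the paper also gets transitivity and symmetry ``by construction,'' derives reflexivity from the slogan that seriality, symmetry and transitivity jointly imply reflexivity (which is exactly your length-two-path witness), and handles minimality by citing the textbook fact that $(R\cup R^{-1}\cup\mathrm{id}_{\states})^{*}$ is the smallest equivalence relation containing $R$, whereas you spell out the standard closure argument directly --- no substantive difference. The completeness half is where you genuinely diverge: the paper gives a fresh explicit construction (choose a representative $c\in C$ for each class $C\in\states/_{\equiv}$, set $\pi(s)\defeq c$ for $s\in C$, take $g=\mathrm{id}_{\mathrm{Im}(\pi)}$, and observe ${\equiv}=\mathrm{D}_{\pi}^{\mathrm{id}_{\mathrm{Im}(\pi)}}=\mathrm{E}_{\pi}^{\mathrm{id}_{\mathrm{Im}(\pi)}}$ via Proposition~\ref{proposition:EDD}.2), while you invoke Theorem~\ref{theorem:KD45}.2 as a black box, note that an equivalence relation is serial, transitive and Euclidean, and then collapse $\mathrm{E}_{f}^{g}$ to $\mathrm{D}_{f}^{g}={\equiv}$ using symmetry and transitivity of $\equiv$. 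The two routes in fact produce the same pair --- since $\equiv$ is reflexive, $\mathrm{Im}({\equiv})=\states$, so the KD45 construction degenerates to exactly the paper's $\pi$ with $g=\mathrm{id}$ --- but your version is more modular (it reuses the harder completeness theorem rather than redoing a special case of it), at the cost of hiding the pleasant explicit form of $(f,g)$ and the fact that $g=\mathrm{id}_{\mathrm{Im}(f)}$, which the paper exploits again in Proposition~\ref{proposition:EDD} and Theorem~\ref{theorem:DoxasticEpistemic}. Your reading of ``equivalence relation'' as reflexive on all of $\states$ matches the paper's (it quotients $\states$ by $\equiv$), so the seriality step is legitimate.
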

\begin{myproof}
For soundness, consider that 
	since $\mathrm{D}_{f}^{g}$ is serial and \emph{transitive} by construction, and
	since the converse operation preserves the seriality and transitivity of $\mathrm{D}_{f}^{g}$,
		$\mathrm{E}_{f}^{g}$ is so too.
Finally, 
	since $\mathrm{E}_{f}^{g}$ is \emph{symmetric} by construction,
	$\mathrm{E}_{f}^{g}$ is also \emph{reflexive,} and thus an equivalence relation 
		containing $\mathrm{D}_{f}^{g}$.
(Seriality, symmetry, and transitivity jointly imply reflexivity.)
To see that 
	$\mathrm{E}_{f}^{g}$ is the \emph{smallest} such relation,
		recall from \cite{DataMining} that 
			for arbitrary $R\subseteq\states\times\states$,
				the relation $(R\cup(R)^{-1}\cup\mathrm{id}_{\states})^{*}$ is
					the smallest equivalence relation containing $R$, where
						`$^{*}$' is the reflexive-transitive-closure operation.
However,
	we can spare $\mathrm{id}_{\states}$ and the reflexive closure,
		since $\mathrm{id}_{\states}\subseteq\mathrm{E}_{f}^{g}$ is reflexive by construction.

For completeness,
	suppose that $\emptyset\neq{\equiv}\subseteq\states\times\states$ is an equivalence relation.
Then for each equivalence class $C\in\states/_{\equiv}$, choose $c\in C$ and 
define $\pi(s)\defeq c$	for all $s\in C$.
Clearly, 
	${\equiv}=\mathrm{D}_{\pi}^{\mathrm{id}_{\mathrm{Im}(\pi)}}$, and
	$\mathrm{D}_{\pi}^{\mathrm{id}_{\mathrm{Im}(\pi)}}=\mathrm{E}_{\pi}^{\mathrm{id}_{\mathrm{Im}(\pi)}}$
		(see also Proposition~\ref{proposition:EDD}.2).
Thus 
	${\equiv}=\mathrm{E}_{\pi}^{\mathrm{id}_{\mathrm{Im}(\pi)}}$.
\end{myproof}


The following proposition is the basis for our third main result, namely 
Theorem~\ref{theorem:DoxasticEpistemic}.
\begin{proposition}[Doxastic-epistemic accessibility inclusions]\label{proposition:EDD}
For all doxastic-epistemic function pairs $(f,g)$ on $\states:$
\begin{enumerate}
	\item $
	{\mathrm{D}_{f}^{g}}\subseteq{\mathrm{E}_{f}^{g}}\,;$
	\item ${\mathrm{D}_{f}^{g}}={\mathrm{E}_{f}^{g}}$ if and only if
			$g=\mathrm{id}_{\mathrm{Im}(f)}\,;$
	\item $\mathrm{id}_{\states}=
			{\mathrm{E}_{\mathrm{id}_{\states}}^{\mathrm{id}_{\states}}}=
			{\mathrm{D}_{\mathrm{id}_{\states}}^{\mathrm{id}_{\states}}}\,.$
\end{enumerate}
\end{proposition}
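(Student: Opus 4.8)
The plan is to handle the three items in order, each reducing to facts already established. For item~1, I would simply observe that for every relation $R\subseteq\states\times\states$ one has $R\subseteq R\cup R^{-1}\subseteq(R\cup R^{-1})^{+}$; instantiating $R\defeq\mathrm{D}_{f}^{g}$ and unfolding Definition~\ref{definition:FunctionParametricEpistemicAccessibility} gives ${\mathrm{D}_{f}^{g}}\subseteq{\mathrm{E}_{f}^{g}}$ immediately.

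For the if-direction of item~2, assume $g=\mathrm{id}_{\mathrm{Im}(f)}$. Then Proposition~\ref{proposition:DoxasticSymmetry} yields $\mathrm{D}_{f}^{g}=(\mathrm{D}_{f}^{g})^{-1}$, so the union in Definition~\ref{definition:FunctionParametricEpistemicAccessibility} collapses to $\mathrm{D}_{f}^{g}$; and since $\mathrm{D}_{f}^{g}$ is transitive by Theorem~\ref{theorem:KD45}.1.b.i, it already equals its own transitive closure, whence $\mathrm{E}_{f}^{g}=\mathrm{D}_{f}^{g}$. For the only-if-direction, assume $\mathrm{D}_{f}^{g}=\mathrm{E}_{f}^{g}$. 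Since $(\mathrm{D}_{f}^{g})^{-1}\subseteq\mathrm{E}_{f}^{g}=\mathrm{D}_{f}^{g}$, taking converses gives the reverse inclusion as well, so $\mathrm{D}_{f}^{g}=(\mathrm{D}_{f}^{g})^{-1}$; Proposition~\ref{proposition:DoxasticSymmetry} then delivers $g=\mathrm{id}_{\mathrm{Im}(f)}$. (Alternatively one may invoke Theorem~\ref{theorem:S5}.1 to see directly that $\mathrm{E}_{f}^{g}$, hence $\mathrm{D}_{f}^{g}$, is symmetric.)

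For item~3, I would first note that $(\mathrm{id}_{\states},\mathrm{id}_{\states})$ is a doxastic-epistemic function pair: $\mathrm{Im}(\mathrm{id}_{\states})=\states$, so the typing in Definition~\ref{definition:DEFP} is respected, and $\mathrm{id}_{\states}\circ\mathrm{id}_{\states}=\mathrm{id}_{\states}$ is idempotent, so $\mathrm{D}_{\mathrm{id}_{\states}}^{\mathrm{id}_{\states}}$ and $\mathrm{E}_{\mathrm{id}_{\states}}^{\mathrm{id}_{\states}}$ are legitimately defined. By Definition~\ref{definition:FunctionParametricDoxasticAccessibility}, $s\mathrel{\mathrm{D}_{\mathrm{id}_{\states}}^{\mathrm{id}_{\states}}}s'$ iff $\mathrm{id}_{\states}(\mathrm{id}_{\states}(s))=\mathrm{id}_{\states}(s')$ iff $s=s'$, i.e., $\mathrm{D}_{\mathrm{id}_{\states}}^{\mathrm{id}_{\states}}=\mathrm{id}_{\states}$. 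Since here $g=\mathrm{id}_{\states}=\mathrm{id}_{\mathrm{Im}(\mathrm{id}_{\states})}$, item~2 applies and gives $\mathrm{E}_{\mathrm{id}_{\states}}^{\mathrm{id}_{\states}}=\mathrm{D}_{\mathrm{id}_{\states}}^{\mathrm{id}_{\states}}=\mathrm{id}_{\states}$ (or one computes directly that the transitive closure of $\mathrm{id}_{\states}\cup\mathrm{id}_{\states}^{-1}=\mathrm{id}_{\states}$ is $\mathrm{id}_{\states}$).

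I do not expect a genuine obstacle here: the proposition is a corollary-level assembly of Definitions~\ref{definition:FunctionParametricDoxasticAccessibility}--\ref{definition:FunctionParametricEpistemicAccessibility}, Theorems~\ref{theorem:KD45}--\ref{theorem:S5}, and Proposition~\ref{proposition:DoxasticSymmetry}. The only points demanding care are (i) citing Proposition~\ref{proposition:DoxasticSymmetry} for \emph{both} directions of item~2, and (ii) verifying in item~3 that the function-pair side condition holds so that the notation $\mathrm{D}_{\mathrm{id}_{\states}}^{\mathrm{id}_{\states}}$, $\mathrm{E}_{\mathrm{id}_{\states}}^{\mathrm{id}_{\states}}$ is even well formed before computing with it.
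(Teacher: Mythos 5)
Your proof is correct and follows essentially the same route as the paper's (which is given only in telegraphic form): item~1 by unfolding the definition of $\mathrm{E}_{f}^{g}$, item~2 by combining Proposition~\ref{proposition:DoxasticSymmetry} with the definition of $\mathrm{E}_{f}^{g}$ (using transitivity of $\mathrm{D}_{f}^{g}$ to collapse the closure), and item~3 as a direct computation together with item~2. Your version merely spells out the details the paper leaves to the reader, including the well-formedness check in item~3.
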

\begin{myproof}
	For 
	1, inspect definitions.
	%
	%
	%
	%
	%
	2 follows from 
			Proposition~\ref{proposition:DoxasticSymmetry} and
			the definition of ${\mathrm{E}_{f}^{g}}\,.$
	For 3, inspect 2.
\end{myproof}	

\begin{definition}[Doxastic-epistemic similarity type] \label{definition:DEST}
Let 
\begin{itemize}
	\item $\mathcal{P}\neq\emptyset$ designate some set of \emph{atomic propositions} $P\,;$ 
	\item $\mathcal{T}$ a set of \emph{types} $T$ such that $\mathtt{S}\in\mathcal{T}\,;$  
	\item $\mathcal{G}$ a set of typed function names $\texttt{g}:T\to T'$ 
			(abbreviated as $\texttt{g}$ when clear from context) such that 
				$(\mathtt{id}_T:T\to T)\in\mathcal{G}$ for all $T\in \mathcal{T}\,;$
	\item $\mathcal{F}$ a set of typed function names $\texttt{f}:\mathtt{S}\to T$ 
			(abbreviated as $\texttt{f}$ when clear from context) such that
				$(\mathtt{id}_\mathtt{S}:\mathtt{S}\to\mathtt{S})\in\mathcal{F}$ and 
				if 
					$(\texttt{f}:\mathtt{S}\to T)\in\mathcal{F}$ and 
					$(\texttt{h}:T\to T')\in\mathcal{F}\cup\mathcal{G}$
				then $(\texttt{h}\circ_{T}\texttt{f}:\mathtt{S}\to T')\in\mathcal{F}\,;$
	\item $\Pi_B, \Pi_K\subseteq \Pi\defeq\{(\texttt{f}:T\to T', \texttt{g}:T'\to T'')\in \mathcal{F}\times \mathcal{G}\}$ \emph{belief-label} and \emph{knowledge-label sets,} and 
	the so-called \emph{label set}, respectively.
\end{itemize}
Then, $$\Sigma\defeq(\mathcal{P},\mathcal{T},\mathcal{G},\mathcal{F},\Pi_B, \Pi_K)$$
	is a \emph{doxastic-epistemic similarity type}.
\end{definition}
Note the above-introduced notational conventions:
	we use 
		$\emph{\texttt{f}}$, $\emph{\texttt{g}}$, and $\emph{\texttt{h}}$ as meta-variables for typed function names, and 
		$f$, $g$, and $h$ as meta-variables for typed functions;  
	$\mathtt{id}_{\mathtt{S}}$ is an example of a (typed) function name, and
	$\mathrm{id}_{\states}$ is an example of a (typed) function.

\begin{definition}[Functional doxastic-epistemic language]\label{definition:FPLang}
Given a doxastic-epi\-stemic similarity type $\Sigma$ with   
	set $\mathcal{P}$ of atomic propositions, and  
	belief- and knowledge-label set $\Pi_B$ and $\Pi_K$, respectively, 
\begin{eqnarray*}
\mathcal{L}(\Sigma) \ni\phi&::=& P\; (P\in \mathcal{P})\mid\neg\phi\mid\phi\wedge\phi\\
&& \mid\B{\texttt{f}}{\texttt{g}}(\phi)\;((\texttt{f},\texttt{g})\in \Pi_B)\\
&& \mid\K{\texttt{f}}{\texttt{g}}(\phi)\;((\texttt{f},\texttt{g})\in \Pi_K)
\end{eqnarray*}
is the \emph{doxastic-epistemic language over $\Sigma$}.
\end{definition}
We intend for the operators $\B{\texttt{\emph{f}}}{\texttt{\emph{g}}}$ and $\K{\texttt{\emph{f}}}{\texttt{\emph{g}}}$ to have flexible readings, though they generally relate to belief and knowledge, respectively.  We may associate every pair $(\texttt{\emph{f}},\texttt{\emph{g}})$ with 
an agent $a$ in some given set $\mathcal{A}$ of agents (as we do in Proposition~\ref{proposition:RelatedWork}). 
When doing so, we may read $\B{\texttt{\emph{f}}}{\texttt{\emph{g}}}$ as 
``agent $a$ believes that $\phi$'' and $\K{\texttt{\emph{f}}}{\texttt{\emph{g}}}$ as 
``agent $a$ knows that $\phi$,'' where $a$ is the agent associated with the pair 
$(\texttt{\emph{f}},\texttt{\emph{g}})$.
\begin{definition}[Functional doxastic-epistemic models] \label{definition:FPM}
Given a doxastic-epist\-emic similarity type 
$\Sigma = (\mathcal{P},\mathcal{T},\mathcal{G},\mathcal{F},\Pi_B, \Pi_K)$,
			let
			\begin{itemize}
			\item $\langle \states, \iota\rangle$ be a so-called
				\emph{$\Sigma$-instantiation structure on $\states$} with an interpretation function $\iota$ 
					for types $T\in\mathcal{T}$ and typed function names constrained such that: 
				\begin{itemize}
					\item $\iota(\mathtt{S}) = \states$ and $\iota(T) \subseteq \states\,;$
					\item $\boxed{\begin{array}[t]{@{}r@{\ \ }c@{\ \ }l@{}}
							\iota(\mathtt{id}_{T}:T\to T)&\defeq&\mathrm{id}_{\iota(T)}\,,\\[\jot]
							\iota(\texttt{h}\circ_{T'}\texttt{f}:T\to T'')&\defeq&
								\iota(\texttt{h}:T'\to T'')\circ
								\iota(\texttt{f}:T\to T')\,;
							\end{array}}$
					\item $\iota(\texttt{h}:T\to T')$ is a function $h:\iota(T)\to \iota(T')$ such that
							if $\texttt{h}\in \mathcal{G}$ then $h$ is idempotent;
				\end{itemize}
				\item $\langle \states,  \{\mathrm{D}_{\iota(\texttt{f})}^{\iota(\texttt{g})}\}_{(\texttt{f},\texttt{g})\in\Pi_B},\{\mathrm{E}_{\iota(\texttt{f})}^{\iota(\texttt{g})}\}_{(\texttt{f},\texttt{g})\in\Pi_K}\rangle$ a \emph{doxastic-epistemic $\Sigma$-frame on $\langle \states, \iota\rangle;$}
				\item $\mathcal{V}:\mathcal{P}\rightarrow2^{\states}$ a 					standard modal \emph{valuation function} \cite{ModalLogicSemanticPerspective}, 
						mapping each atomic proposition $P$ to 
							the set of states where $P$ is considered true.
			\end{itemize}
			Then, 							
				$$\mathfrak{S}\defeq \langle \states, \{\mathrm{D}_{\iota(\texttt{f})}^{\iota(\texttt{g})}\}_{(\texttt{f},\texttt{g})\in\Pi_B},\{\mathrm{E}_{\iota(\texttt{f})}^{\iota(\texttt{g})}\}_{(\texttt{f},\texttt{g})\in\Pi_K},\mathcal{V}\rangle$$ is 
				the \textbf{\emph{doxastic-epistemic $\Sigma$-model on $\langle \states, \iota\rangle$ and $\mathcal{V}$}}, and 
				$(\mathfrak{S},s)$ a \emph{\textbf{pointed} doxastic-epistemic $\Sigma$-model on 
	$\langle \states, \iota\rangle$ and $\mathcal{V}$} for any $s\in \states$.
\end{definition}

\begin{definition}[Functional doxastic-epistemic logic]\label{definition:FPDEL}
Given a doxastic-epistemic similarity type $\Sigma = (\mathcal{P},\mathcal{T},\mathcal{G},\mathcal{F},\Pi_B, \Pi_K)$, define
\begin{itemize}
	\item a standard \emph{satisfaction relation} $\models$ between 
		pointed doxastic-epi\-stemic $\Sigma$-models and 
		their languages $\mathcal{L}(\Sigma)$ as in 
		Table~\ref{table:DoxasticEpistemicSatisfaction};
	\item $\mathfrak{S}\models\phi$ :iff for all $s\in\states$, $(\mathfrak{S},s)\models\phi\,;$
	\item $\models\phi$ :iff 
			for all doxastic-epistemic $\Sigma$-models $\mathfrak{S}$, 
						$\mathfrak{S}\models\phi\,.$
\end{itemize}
\begin{table}[t]
\centering
\caption{Doxastic-epistemic satisfaction relation}
$\boxed{\begin{array}{@{}rcl@{}}
	(\mathfrak{S},s) \models P 
		&\text{:iff}& s\in\mathcal{V}(P)\\
	(\mathfrak{S},s) \models \neg\phi 
		&\text{:iff}& \text{not $(\mathfrak{S},s)\models\phi$}\\
	(\mathfrak{S},s) \models \phi\land\phi' 
		&\text{:iff}& \text{$(\mathfrak{S},s)\models\phi$ and $(\mathfrak{S},s)\models\phi'$}\\
	(\mathfrak{S},s) \models \B{\emph{\texttt{f}}}{\emph{\texttt{g}}}(\phi) 
		&\text{:iff}& \text{for all $s'\in\states$, 
			if $s\mathrel{\mathrm{D}_{\iota(\emph{\texttt{f}})}^{\iota(\emph{\texttt{g}})}}s'$ then $(\mathfrak{S},s')\models\phi$}\\
	(\mathfrak{S},s) \models \K{\emph{\texttt{f}}}{\emph{\texttt{g}}}(\phi) 
		&\text{:iff}& \text{for all $s'\in\states$, 
			if $s\mathrel{\mathrm{E}_{\iota(\emph{\texttt{f}})}^{\iota(\emph{\texttt{g}})}}s'$ then $(\mathfrak{S},s')\models\phi$}
\end{array}}$
\label{table:DoxasticEpistemicSatisfaction}
\end{table}
\end{definition}

\begin{proposition}[KD45-belief modality]\label{theorem:KD45BeliefModality}
For all $(\texttt{f},\texttt{g})\in \Pi_B$,
$\B{\texttt{f}}{\texttt{g}}$ is a \emph{KD45-belief modality.} That is:
	\begin{enumerate}
		\item $\models\B{\texttt{f}}{\texttt{g}}(\phi\rightarrow\phi')\rightarrow
						(\B{\texttt{f}}{\texttt{g}}(\phi)\rightarrow\B{\texttt{f}}{\texttt{g}}(\phi'))$\quad(Kripke's law, K)
		\item $\models\neg\B{\texttt{f}}{\texttt{g}}(\bot)$ 
				(equivalently, $\models\B{\texttt{f}}{\texttt{g}}(\phi)\rightarrow\neg\B{\texttt{f}}{\texttt{g}}(\neg\phi)$)\quad(belief consistency, D)
		\item $\models\B{\texttt{f}}{\texttt{g}}(\phi)\rightarrow\B{\texttt{f}}{\texttt{g}}(\B{\texttt{f}}{\texttt{g}}(\phi))$\quad(positive introspection, \textbf{4})
		\item $\models\neg\B{\texttt{f}}{\texttt{g}}(\phi)\rightarrow\B{\texttt{f}}{\texttt{g}}(\neg\B{\texttt{f}}{\texttt{g}}(\phi))$\quad(negative introspection, \textbf{5})
		\item if $\models\phi$ then $\models\B{\texttt{f}}{\texttt{g}}(\phi)$\quad(necessitation, N)
	\end{enumerate}
\end{proposition}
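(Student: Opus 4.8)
The plan is to reduce everything to the frame-correspondence content already packaged in Theorem~\ref{theorem:KD45}, exploiting the fact that, by Table~\ref{table:DoxasticEpistemicSatisfaction}, $\B{\texttt{f}}{\texttt{g}}$ is nothing more than an ordinary relational box interpreted over the accessibility relation $\mathrm{D}_{\iota(\texttt{f})}^{\iota(\texttt{g})}$. So first I would check the standing hypothesis that makes Theorem~\ref{theorem:KD45} applicable: for every $(\texttt{f},\texttt{g})\in\Pi_B$ and every doxastic-epistemic $\Sigma$-model $\mathfrak{S}$, the interpreted pair $(f,g)\defeq(\iota(\texttt{f}),\iota(\texttt{g}))$ is a doxastic-epistemic function pair on $\states$ in the sense of Definition~\ref{definition:DEFP}. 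This is read straight off Definition~\ref{definition:FPM}: $\iota(\mathtt{S})=\states$, so $f$ is a total operation on $\states$; the typing constraints on $\Pi$ give $\mathrm{Im}(g)\subseteq\mathrm{Im}(f)$; and $\texttt{g}\in\mathcal{G}$ forces $g$ to be idempotent, which is precisely the second (equivalent) clause of Definition~\ref{definition:DEFP}. Hence the soundness half of Theorem~\ref{theorem:KD45} yields that $\mathrm{D}_{\iota(\texttt{f})}^{\iota(\texttt{g})}$ is serial, transitive, and Euclidean.

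Next I would dispose of K and N, which require no frame conditions. For K: fix $(\mathfrak{S},s)$, assume $(\mathfrak{S},s)\models\B{\texttt{f}}{\texttt{g}}(\phi\rightarrow\phi')$ and $(\mathfrak{S},s)\models\B{\texttt{f}}{\texttt{g}}(\phi)$, take any $s'$ with $s\mathrel{\mathrm{D}_{\iota(\texttt{f})}^{\iota(\texttt{g})}}s'$, combine the two hypotheses at $s'$ to obtain $(\mathfrak{S},s')\models\phi'$, and fold the satisfaction clause back up. For N: if $\models\phi$ then $\phi$ is true at every state of every model, hence in particular at every $\mathrm{D}$-successor, so $\B{\texttt{f}}{\texttt{g}}(\phi)$ holds everywhere. (Equivalently, one simply remarks that any box read off a binary relation through Table~\ref{table:DoxasticEpistemicSatisfaction} is a \emph{normal} modal operator, so K and N are automatic; this also justifies treating the two displayed forms of D as interchangeable.)

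Finally I would treat D, \textbf{4}, and \textbf{5} by the textbook correspondence arguments, feeding in the three frame properties from the first step. For D: by seriality $s$ always has a $\mathrm{D}$-successor, so $\B{\texttt{f}}{\texttt{g}}(\bot)$ can never hold. For \textbf{4}: by transitivity, any state reachable from $s$ in two $\mathrm{D}$-steps is already a one-step successor, so truth of $\phi$ at all one-step successors propagates to all two-step successors, giving $\B{\texttt{f}}{\texttt{g}}(\phi)\rightarrow\B{\texttt{f}}{\texttt{g}}(\B{\texttt{f}}{\texttt{g}}(\phi))$. For \textbf{5}: assuming $(\mathfrak{S},s)\not\models\B{\texttt{f}}{\texttt{g}}(\phi)$ fixes some $s''$ with $s\mathrel{\mathrm{D}_{\iota(\texttt{f})}^{\iota(\texttt{g})}}s''$ and $(\mathfrak{S},s'')\not\models\phi$; by Euclideanness $s''$ is then a $\mathrm{D}$-successor of every $\mathrm{D}$-successor $s'$ of $s$, so $(\mathfrak{S},s')\not\models\B{\texttt{f}}{\texttt{g}}(\phi)$ for each such $s'$, i.e.\ $(\mathfrak{S},s)\models\B{\texttt{f}}{\texttt{g}}(\neg\B{\texttt{f}}{\texttt{g}}(\phi))$. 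I would write \textbf{5} out in full and note that D and \textbf{4} are strictly analogous.

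I do not anticipate a genuine obstacle. The only point that calls for any care is the bookkeeping in the first step — verifying that the interpreted ingredients $(\iota(\texttt{f}),\iota(\texttt{g}))$ of a belief label genuinely satisfy Definition~\ref{definition:DEFP}, so that Theorem~\ref{theorem:KD45} may be invoked — and that amounts to unwinding the constraints in Definition~\ref{definition:FPM}. Everything after that is the standard dictionary between frame conditions and modal axioms, made routine by the fact that our satisfaction clauses are exactly the usual relational ones.
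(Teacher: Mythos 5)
Your proposal is correct and follows essentially the same route as the paper's own (much terser) proof: invoke Theorem~\ref{theorem:KD45}.1 for seriality, transitivity, and Euclideanness of $\mathrm{D}_{\iota(\texttt{f})}^{\iota(\texttt{g})}$, note that K and N hold for any normal box read off a relation via Table~\ref{table:DoxasticEpistemicSatisfaction}, and derive D, \textbf{4}, and \textbf{5} by the standard frame-correspondence arguments. Your explicit check that $(\iota(\texttt{f}),\iota(\texttt{g}))$ is a doxastic-epistemic function pair in the sense of Definition~\ref{definition:DEFP} is a useful piece of bookkeeping the paper leaves implicit in Definition~\ref{definition:FPM}.
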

\begin{myproof}
	By Theorem~\ref{theorem:KD45}.1.
	K and N are forced by Kripke-semantics.
	The D-law corresponds to seriality,
		the 4-law to transitivity, and
		the 5-law to Euclideanness.
\end{myproof}

\begin{proposition}[S5-knowledge modality]\label{theorem:S5KnowledgeModality}
For all $(\texttt{f},\texttt{g})\in \Pi_K$,
$\K{\texttt{f}}{\texttt{g}}$ is an \emph{S5-belief modality.} That is:
	\begin{enumerate}
		\item $\models\K{\texttt{f}}{\texttt{g}}(\phi\rightarrow\phi')\rightarrow
						(\K{\texttt{f}}{\texttt{g}}(\phi)\rightarrow\K{\texttt{f}}{\texttt{g}}(\phi'))$\quad(Kripke's law, K)
		\item $\models\K{\texttt{f}}{\texttt{g}}(\phi)\rightarrow\phi$ \quad(truth law, T)
		\item $\models\K{\texttt{f}}{\texttt{g}}(\phi)\rightarrow\K{\texttt{f}}{\texttt{g}}(\K{\texttt{f}}{\texttt{g}}(\phi))$\quad(positive introspection, \textbf{4})
		\item $\models\neg\K{\texttt{f}}{\texttt{g}}(\phi)\rightarrow\K{\texttt{f}}{\texttt{g}}(\neg\K{\texttt{f}}{\texttt{g}}(\phi))$\quad(negative introspection, \textbf{5})
		\item if $\models\phi$ then $\models\K{\texttt{f}}{\texttt{g}}(\phi)$\quad(necessitation, N)
	\end{enumerate}
\end{proposition}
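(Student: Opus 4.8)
The plan is to mirror the proof of Proposition~\ref{theorem:KD45BeliefModality}, replacing the \textbf{KD45} accessibility schema by the \textbf{S5} one. First I would invoke Theorem~\ref{theorem:S5}.1, which guarantees that for every $(\texttt{f},\texttt{g})\in\Pi_K$ the relation $\mathrm{E}_{\iota(\texttt{f})}^{\iota(\texttt{g})}$ interpreting $\K{\texttt{f}}{\texttt{g}}$ in any doxastic-epistemic $\Sigma$-model is an equivalence relation on $\states$ --- in fact the smallest one containing $\mathrm{D}_{\iota(\texttt{f})}^{\iota(\texttt{g})}$. From that point on the argument is the standard modal correspondence bookkeeping, as in the proof of the \textbf{KD45} case.

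Concretely: K (law~1) and N (law~5) hold in every Kripke frame regardless of any frame condition, so they are forced by the satisfaction clause for $\K{\texttt{f}}{\texttt{g}}$ in Table~\ref{table:DoxasticEpistemicSatisfaction}. The truth law~T (law~2) corresponds to reflexivity of $\mathrm{E}_{\iota(\texttt{f})}^{\iota(\texttt{g})}$, which holds since that relation is an equivalence relation. Positive introspection~\textbf{4} (law~3) corresponds to transitivity, again part of being an equivalence relation. Negative introspection~\textbf{5} (law~4) corresponds to Euclideanness; since any equivalence relation is symmetric and transitive, it is Euclidean, so law~4 holds as well. Each correspondence can either be quoted from a standard reference or checked in one line by unfolding the satisfaction clause together with the relevant frame property.

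I expect essentially no obstacle here: the only thing to state carefully is that Theorem~\ref{theorem:S5}.1 delivers reflexivity, transitivity, and (via symmetry plus transitivity) Euclideanness simultaneously, so that all five \textbf{S5} axioms are discharged uniformly --- exactly as the proof of Proposition~\ref{theorem:KD45BeliefModality} discharges the \textbf{KD45} axioms from Theorem~\ref{theorem:KD45}.1. One cosmetic caveat worth flagging while writing the proof: the statement calls $\K{\texttt{f}}{\texttt{g}}$ an ``\textbf{S5}-belief modality'' where it should read ``\textbf{S5}-knowledge modality''; this does not affect the argument.
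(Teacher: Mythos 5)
Your proposal is correct and follows exactly the paper's route: the paper also proves this by citing Theorem~\ref{theorem:S5}.1 and the standard frame-correspondence facts (it only spells out that T corresponds to reflexivity, leaving the rest implicit as in Proposition~\ref{theorem:KD45BeliefModality}). Your extra observation that Euclideanness follows from symmetry plus transitivity, and your note about the ``S5-belief modality'' typo, are both accurate but do not change the argument.
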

\begin{myproof}
	By Theorem~\ref{theorem:S5}.1.
	The T-law corresponds to reflexivity.
\end{myproof}
The following theorem summarises our main results.
\begin{theorem}[Doxastic-epistemic modality conditionals]\label{theorem:DoxasticEpistemic}\ 
	\begin{enumerate}
		\item $\boxed{\models\K{\texttt{f}}{\mathtt{id}_{\mathtt{Im}(\texttt{f})}}(\phi)\leftrightarrow\B{\texttt{f}}{\mathtt{id}_{\mathtt{Im}(\texttt{f})}}(\phi)}$\quad(knowledge as unbiased belief)
		\item $\models\B{\texttt{f}}{\mathtt{id}_{\mathtt{Im}(\texttt{f})}}(\phi)\rightarrow\phi$\quad(unbiased belief is true belief)
		\item $\models\phi\rightarrow\K{\mathtt{id}_{\mathtt{S}}}{\mathtt{id}_{\mathtt{S}}}(\phi)$\quad(perfect knowledge)
		\item $\models\K{\mathtt{id}_{\mathtt{S}}}{\mathtt{id}_{\mathtt{S}}}(\phi)\leftrightarrow\B{\mathtt{id}_{\mathtt{S}}}{\mathtt{id}_{\mathtt{S}}}(\phi)$\quad(perfect knowledge as precise belief)
		\item $\models\K{\texttt{f}}{\texttt{g}}(\phi)\rightarrow\B{\texttt{f}}{\texttt{g}}(\phi)$\quad(knowledge implies belief, like in \cite{KnowledgeAsBelief})
		\item bias cancellation:
			\begin{enumerate}
		\item $\mathfrak{S}\models\K{\texttt{f}}{\texttt{g}}(\phi)\leftrightarrow\B{\texttt{f}}{\texttt{g}}(\phi)$ if and only if $\iota(\texttt{g})=\mathrm{id}_{\mathrm{Im}(\iota(\texttt{f}))}\,,$ 
				\item $\iota(\texttt{g})=\mathrm{id}_{\mathrm{Im}(\iota(\texttt{f}))}$ if and only if 
						$\iota(\texttt{g})$ is injective;
			\end{enumerate}
		\item $\boxed{\models\neg\B{\mathtt{id}_{\mathtt{S}}}{\texttt{g}}(\phi)\rightarrow\B{\mathtt{id}_{\mathtt{S}}}{\texttt{g}}(\neg\phi)}$\quad(negation-complete belief)
		\item negation-complete knowledge:
			\begin{enumerate}
				\item $\mathfrak{S}\models\neg\K{\mathtt{id}_{\mathtt{S}}}{\texttt{g}}(\phi)\rightarrow\K{\mathtt{id}_{\mathtt{S}}}{\texttt{g}}(\neg\phi)$ if and only if $\iota(\texttt{g})$ is injective,
				\item negation-complete knowledge coincides with 
						perfect knowledge.
			\end{enumerate}
	\end{enumerate}
\end{theorem}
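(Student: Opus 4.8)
The plan is to argue everything at the semantic level and reduce each displayed (bi)conditional to a comparison between the doxastic relation $\mathrm{D}\defeq\mathrm{D}_{\iota(\texttt{f})}^{\iota(\texttt{g})}$ and the epistemic relation $\mathrm{E}\defeq\mathrm{E}_{\iota(\texttt{f})}^{\iota(\texttt{g})}$, by reading off the two modal clauses of Table~\ref{table:DoxasticEpistemicSatisfaction}; all the comparisons I need are already recorded in Proposition~\ref{proposition:EDD}, Proposition~\ref{proposition:DoxasticSymmetry}, and Theorems~\ref{theorem:KD45}--\ref{theorem:S5}. The one preliminary observation is that, by the interpretation constraints of Definition~\ref{definition:FPM}, $\iota(\mathtt{id}_{\mathtt{S}})=\mathrm{id}_{\states}$ and, more generally, $\iota$ sends the function name $\mathtt{id}_{\mathtt{Im}(\texttt{f})}$ to $\mathrm{id}_{\mathrm{Im}(\iota(\texttt{f}))}$; this is what lets the type-level statements of the theorem match the set-level facts about $\mathrm{D}$ and $\mathrm{E}$.

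Items 1--5 are then routine. Items 1 and 4 come straight from Proposition~\ref{proposition:EDD}.2--3: when $\iota(\texttt{g})=\mathrm{id}_{\mathrm{Im}(\iota(\texttt{f}))}$ we get $\mathrm{D}=\mathrm{E}$, so the belief clause and the knowledge clause become identical, and item 4 is the special case $\texttt{f}=\mathtt{id}_{\mathtt{S}}$ (where also $\mathrm{D}=\mathrm{E}=\mathrm{id}_{\states}$ by Proposition~\ref{proposition:EDD}.3). Item 3 follows because with accessibility $\mathrm{id}_{\states}$ the clause for $\K{\mathtt{id}_{\mathtt{S}}}{\mathtt{id}_{\mathtt{S}}}(\phi)$ at $s$ reads simply ``$(\mathfrak{S},s)\models\phi$''. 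For item 2 I would invoke item 1 together with the T-law of Proposition~\ref{theorem:S5KnowledgeModality}, or argue directly that the relation $\mathrm{D}_{\iota(\texttt{f})}^{g}$ with $g=\mathrm{id}_{\mathrm{Im}(\iota(\texttt{f}))}$ is serial and transitive (Theorem~\ref{theorem:KD45}.1) and symmetric (Proposition~\ref{proposition:DoxasticSymmetry}), hence reflexive, which forces $\B{\texttt{f}}{\mathtt{id}_{\mathtt{Im}(\texttt{f})}}(\phi)\rightarrow\phi$. Item 5 is immediate from $\mathrm{D}\subseteq\mathrm{E}$ (Proposition~\ref{proposition:EDD}.1): truth of $\phi$ at all $\mathrm{E}$-successors of $s$ entails truth of $\phi$ at all $\mathrm{D}$-successors, i.e.\ $\K{\texttt{f}}{\texttt{g}}(\phi)\rightarrow\B{\texttt{f}}{\texttt{g}}(\phi)$.

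For item 6(b) I would use that membership $\texttt{g}\in\mathcal{G}$ forces $\iota(\texttt{g})$ idempotent (Definition~\ref{definition:FPM}), and that an idempotent injection is the identity on its domain $\mathrm{Im}(\iota(\texttt{f}))$; the converse is trivial. Item 6(a), ``if'' direction, is again $\mathrm{D}=\mathrm{E}$ from Proposition~\ref{proposition:EDD}.2; for ``only if'' I argue contrapositively: if $\iota(\texttt{g})\neq\mathrm{id}_{\mathrm{Im}(\iota(\texttt{f}))}$ then $\mathrm{D}\subsetneq\mathrm{E}$, so I pick $s\mathrel{\mathrm{E}}s'$ with $s\mathrel{\mathrm{D}}s'$ failing and take a valuation with $\mathcal{V}(P)\defeq\{t\in\states\mid s\mathrel{\mathrm{D}}t\}$, which makes $\B{\texttt{f}}{\texttt{g}}(P)$ true and $\K{\texttt{f}}{\texttt{g}}(P)$ false at $s$. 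Items 7 and 8 turn on the fact that for $\texttt{f}=\mathtt{id}_{\mathtt{S}}$ the relation $\mathrm{D}_{\mathrm{id}_{\states}}^{\iota(\texttt{g})}$ is the graph of the \emph{total function} $\iota(\texttt{g})$, so $\B{\mathtt{id}_{\mathtt{S}}}{\texttt{g}}(\phi)$ holds at $s$ iff $\phi$ holds at $\iota(\texttt{g})(s)$; hence $\neg\B{\mathtt{id}_{\mathtt{S}}}{\texttt{g}}(\phi)\leftrightarrow\B{\mathtt{id}_{\mathtt{S}}}{\texttt{g}}(\neg\phi)$, which yields item 7. For item 8(a): if $\iota(\texttt{g})$ is injective then $\iota(\texttt{g})=\mathrm{id}_{\states}$ by 6(b), so $\mathrm{D}=\mathrm{E}$ and the implication reduces to item 7; if $\iota(\texttt{g})$ is not injective then $\iota(\texttt{g})\neq\mathrm{id}_{\states}$, so $\mathrm{E}_{\mathrm{id}_{\states}}^{\iota(\texttt{g})}$ properly contains $\mathrm{id}_{\states}$, some $\mathrm{E}$-class has two distinct members $s,s''$, and the valuation $\mathcal{V}(P)\defeq\{s\}$ makes both $\neg\K{\mathtt{id}_{\mathtt{S}}}{\texttt{g}}(P)$ and $\neg\K{\mathtt{id}_{\mathtt{S}}}{\texttt{g}}(\neg P)$ true at $s$, refuting the implication. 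Item 8(b) is then bookkeeping: negation-completeness forces $\iota(\texttt{f})=\mathrm{id}_{\states}$ and (via 6(b)) $\iota(\texttt{g})=\mathrm{id}_{\states}$, so $\mathrm{E}=\mathrm{id}_{\states}$ by Proposition~\ref{proposition:EDD}.3, which is exactly the accessibility underlying the perfect-knowledge modality of items 3--4.

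The only non-routine steps are the ``only if'' directions of 6(a) and 8(a): they are separation statements rather than relational identities, so I must (i) read ``$\mathfrak{S}\models\cdots$'' as quantifying over the valuations of a fixed $\Sigma$-instantiation structure (holding $\iota(\texttt{f})$ and $\iota(\texttt{g})$ fixed) --- otherwise a constant valuation validates every schema in the theorem and the equivalences collapse --- and (ii) actually build the separating valuation from a witness of $\mathrm{D}\subsetneq\mathrm{E}$, as sketched above. I expect this, together with the one-off chore of matching the declared codomain type of $\texttt{g}$ to the effective domain $\mathrm{Im}(\iota(\texttt{f}))$ on which $\mathrm{D}_{\iota(\texttt{f})}^{\iota(\texttt{g})}$ is defined, to be the main obstacle; everything else is inspection of the definitions.
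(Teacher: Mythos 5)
Your proposal is correct and follows essentially the same route as the paper: every item is reduced, via the satisfaction clauses, to the relational facts in Proposition~\ref{proposition:EDD} (together with Proposition~\ref{proposition:DoxasticSymmetry}, the T-law, and the observation that $\mathrm{D}_{\mathrm{id}_{\states}}^{\iota(\texttt{g})}$ and, for injective $\iota(\texttt{g})$, $\mathrm{E}_{\mathrm{id}_{\states}}^{\iota(\texttt{g})}$ are functional). The only difference is that you make explicit the separating-valuation constructions behind the only-if directions of 6(a) and 8(a), and the reading of $\mathfrak{S}\models\cdots$ needed for them, which the paper's one-line citations of Proposition~\ref{proposition:EDD}.2 and of functionality leave implicit.
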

\begin{myproof}
	1 follows from Proposition~\ref{proposition:EDD}.2; 
	2 from 1 and Proposition~\ref{theorem:S5KnowledgeModality}.2; 
	3 and 4 from the left and right equation in Proposition~\ref{proposition:EDD}.3, respectively; 
	(4 also as an instance of 1;)
	5 from 
		Proposition~\ref{proposition:EDD}.1; 
	6.a from Proposition~\ref{proposition:EDD}.2; 
	6.b from the fact that an idempotent function that is also injective must be the identity; 
	7 from the functionality of $\mathrm{D}_{\mathrm{id}_{\states}}^{\iota(\emph{\texttt{g}})}$ for 
		any $\iota$; 
	8.a from the functionality of $\mathrm{E}_{\mathrm{id}_{\states}}^{\iota(\emph{\texttt{g}})}$ for  
		injective $\iota(\emph{\texttt{g}})$; and
	8.b from 8.a, 6.b, and 3.
\end{myproof}

The following proposition establishes formal correspondences to related work.
\begin{proposition}[Related work]\label{proposition:RelatedWork}\ 
\begin{enumerate}
	\item \emph{Epistemic accessibility as state indistinguishability \cite{Epistemic_Logic}:} 
		Let $$\states\ni s ::= \mathtt{0}\mid \alpha_{a}(s)\,,$$ where 
			$\mathtt{0}$ designates a zero data point (e.g., an initial state) and 
			$\alpha_{a}$ an action performed by agent $a\in\mathcal{A}$ for some
						finite set $\mathcal{A}\neq\emptyset$ of agents, and 
		define the visibility $\pi_{a}$ in Definition~\ref{definition:EAasI} on $\states$ as 
			\begin{align*}
				\pi_{a}(\mathtt{0})&\defeq\mathtt{0}\\
				\pi_{a}(\alpha_{b}(s))&\defeq
					\begin{cases}
						\alpha_{b}(\pi_{a}(s)) & \text{if $a=b$, and}\\
						\pi_{a}(s) &\text{otherwise.}
					\end{cases}
			\end{align*}
		Then, 
		$$\boxed{\mathrm{E}_{\pi_{a}}^{\mathrm{id}_{\mathrm{Im}(\pi_{a})}}={\equiv_{a}}\,.}$$ 
		Thus we can reconstruct the standard agent-centric epistemic mo\-dality $\K{a}{}$ 
			\cite{Epistemic_Logic} in our framework with 
			the following simple definition 
				$$\text{$\K{a}{}(\phi) \defeq
					\K{\mathtt{pi}_{a}}{\mathtt{id}_{\mathtt{Im}(\mathtt{pi}_a)}}(\phi)\,$}$$
			for 
				doxastic-epistemic similarity types such that 
					$\mathcal{T}\defeq\{\mathtt{S}\}\cup\{\mathtt{Im}(\mathtt{pi}_a)\mid a\in\mathcal{A}\}$, 
					$\mathcal{G}\defeq\{\mathtt{id}_{\mathtt{Im}(\mathtt{pi}_a)}:\mathtt{Im}(\mathtt{pi}_a)\to \mathtt{Im}(\mathtt{pi}_a)\mid a\in\mathcal{A}\}$, 
					$\mathcal{F}\defeq\{\mathtt{pi}_a : \mathtt{S}\to \mathtt{Im}(\mathtt{pi}_a)\mid a\in\mathcal{A}\}$,
					$\Pi_K\defeq\{(\mathtt{pi}_a,\mathtt{id}_{\mathtt{Im}(\mathtt{pi}_a)})\mid a\in\mathcal{A}\}$, and
					$\Pi_B\defeq\emptyset\,;$ and
			an interpretation function $\iota$ on types and typed function names such that 
					$\iota(\mathtt{Im}(\mathtt{pi}_a))\defeq\mathrm{Im}(\pi_{a})$ and 
					$\iota(\mathtt{pi}_{a})\defeq\pi_{a}\,,$ respectively. 
					
					The resulting instantiation structure is $(\states,\iota)$.
	\item \emph{Epistemic Logic as Dynamic Logic \cite{NonMonotonicKnowledge}:}
			Recall 
				Parikh's embedding $\theta$ \cite{NonMonotonicKnowledge} of 
					Epistemic Logic \cite{Epistemic_Logic} into 
					Propositional Dynamic Logic \cite{Dynamic_Logic} with 
						inverse actions \cite{PDLplusInverse}, by 
							which Parikh established an upper, EXPTIME complexity bound for 
								Epistemic Logic (also with common knowledge): 
						\begin{eqnarray*}
							\theta(P)&\defeq&P\\
							\theta(\neg\phi)&\defeq&\neg\theta(\phi)\\
							\theta(\phi\wedge\phi')&\defeq&\theta(\phi)\wedge\theta(\phi')\\
							\theta(\K{a}{}(\phi))&\defeq&[(\alpha_{a}\cup(\alpha_{a})^{-1})^{*}]\theta(\phi)\,,
						\end{eqnarray*}
						where $[(\alpha_{a}\cup(\alpha_{a})^{-1})^{*}]$ is  
							the dynamic necessity modality with
								the program parameter $(\alpha_{a}\cup(\alpha_{a})^{-1})^{*}$ for  
									$\alpha_{a}$ as before.
					Further,
						let $\alpha$ denote actions, and $A$ and $A'$ action terms  
							such as $(\alpha_{a}\cup(\alpha_{a})^{-1})^{*}$, and let 
							$$\begin{array}{@{}rcl@{\qquad}rcl@{}}
								\states\ni s &::=& \mathtt{0}\mid \alpha_{a}(s) &
								\mathrm{R}_{A^{0}}&\defeq&\mathrm{id}_{\states}\\[\jot]
								\mathrm{R}_{\alpha}&\defeq&\{(s,\alpha(s))\mid s\in\states\} &
								\mathrm{R}_{A^{1}}&\defeq&\mathrm{R}_{A}\\[\jot]
								\mathrm{R}_{A^{-1}}&\defeq&(\mathrm{R}_{A})^{-1} &
								\mathrm{R}_{A^{n+1}}&\defeq&\mathrm{R}_{A^{n}}\circ \mathrm{R}_{A^{1}}\\[\jot]
								\mathrm{R}_{A\cup A'}&\defeq&\mathrm{R}_{A}\cup \mathrm{R}_{A'} &
								\mathrm{R}_{A^{*}}&\defeq&\bigcup_{n\in\mathbb{N}}\mathrm{R}_{A^{n}}\,.
							\end{array}$$
					Then, 
						$$\boxed{\mathrm{id}_{\states}\cup\mathrm{E}_{\mathrm{id}_{\states}}^{\mathrm{R}_{\alpha_{a}}}=\mathrm{R}_{(\alpha_{a}\cup(\alpha_{a})^{-1})^{*}}\,,}$$
						where $\mathrm{R}_{(\alpha_{a}\cup(\alpha_{a})^{-1})^{*}}$ is of course an equivalence relation.
						
					Notice that $\mathrm{R}_{\alpha_{a}}$ is not idempotent, and so
						$(\mathrm{id}_{S},\mathrm{R}_{\alpha_{a}})$ is not a 
							doxastic-epistemic function pair, but fortunately 
					thanks to Theorem~\ref{theorem:S5}.2, there is a constructible 
						$\pi_{a}:\states\to\states$ such that 
							\begin{enumerate}
								\item $\mathrm{E}_{\pi_{a}}^{\mathrm{id}_{\states}}=
							\mathrm{id}_{\states}\cup\mathrm{E}_{\mathrm{id}_{\states}}^{\mathrm{R}_{\alpha_{a}}},$ and 
								\item $(\pi_{a},\mathrm{id}_{\states})$ is a doxastic-epistemic function pair.
							\end{enumerate}
	\item Interactive Provability as Explicit Belief \cite{LDiiP}: From \cite{LDiiP}, recall 
			the definition of the (idempotent) operator 
					$\sigma_{a}^{M}:\states\rightarrow\states$ defined on 
					$\states\ni s ::= \mathtt{0} \mid \mathtt{succ}_{a}^{M}(s)$ such that 
							$$\sigma_{a}^{M}(s)\defeq
							\begin{cases}
								s & \text{if $M\in\mathrm{cl}_{a}^{s}(\emptyset)$, and}\\
								\mathtt{succ}_{a}^{M}(s) & \text{otherwise (oracle input),}
							\end{cases}$$
					where 
						$M$ designates a proof term,
						$\mathtt{succ}_{a}^{M}$ a state constructor, and
						$\mathrm{cl}_{a}^{s}$ a closure operator such that 
							$M\in\mathrm{cl}_{a}^{\mathtt{succ}_{a}^{M}(s)}(\emptyset)$.
					(Here, the exact nature of $M$ and $\mathrm{cl}_{a}^{s}$ is unimportant.)
					Then, 
						$$\boxed{\mathrm{D}_{\mathrm{id}_{\states}}^{\sigma_{a}^{M}}={_{M}\mathrm{R}_{a}}\,,}$$
					where ${_{M}\mathrm{R}_{a}}$ is 
						the accessibility relation for 
							the negation-complete proof modality axiomatised in \cite{LDiiP}, obtainable  
								directly as $\mathrm{D}_{\mathrm{id}_{\states}}^{\sigma_{a}^{M}}$.
\end{enumerate}
\end{proposition}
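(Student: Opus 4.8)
The plan is to treat the three items independently; each reduces to unfolding Definition~\ref{definition:FunctionParametricDoxasticAccessibility} (together with Definition~\ref{definition:FunctionParametricEpistemicAccessibility}) and then invoking an adequacy result already proved, so none of them needs a genuinely new argument. \emph{For item~1,} we would first check that $(\pi_{a},\mathrm{id}_{\mathrm{Im}(\pi_{a})})$ is a doxastic-epistemic function pair on $\states$: $\pi_{a}$ is a total operation on $\states$ by structural recursion over the grammar of $\states$, and $\mathrm{id}_{\mathrm{Im}(\pi_{a})}$ is an identity, hence idempotent, so Definition~\ref{definition:DEFP} is met. Unfolding Definition~\ref{definition:FunctionParametricDoxasticAccessibility} with $f=\pi_{a}$ and $g=\mathrm{id}_{\mathrm{Im}(\pi_{a})}$ then gives, for all $s,s'\in\states$, that $s\mathrel{\mathrm{D}_{\pi_{a}}^{\mathrm{id}_{\mathrm{Im}(\pi_{a})}}}s'$ iff $\pi_{a}(s)=\pi_{a}(s')$, which is exactly $s\equiv_{a}s'$ by Definition~\ref{definition:EAasI}; hence $\mathrm{D}_{\pi_{a}}^{\mathrm{id}_{\mathrm{Im}(\pi_{a})}}={\equiv_{a}}$. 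Since the second parameter is $\mathrm{id}_{\mathrm{Im}(\pi_{a})}$, Proposition~\ref{proposition:EDD}.2 yields $\mathrm{E}_{\pi_{a}}^{\mathrm{id}_{\mathrm{Im}(\pi_{a})}}=\mathrm{D}_{\pi_{a}}^{\mathrm{id}_{\mathrm{Im}(\pi_{a})}}={\equiv_{a}}$, the boxed equation. For the reconstruction of $\K{a}{}$, we would then check that the displayed data complete---by closing $\mathcal{F}$, $\mathcal{T}$, $\mathcal{G}$ and $\iota$ under the required identities and compositions---to a doxastic-epistemic similarity type and an admissible instantiation structure in the sense of Definitions~\ref{definition:DEST} and~\ref{definition:FPM} (the idempotency clause on $\mathcal{G}$-interpretations being automatic, as $\mathcal{G}$ consists of identities); then the epistemic satisfaction clause of Table~\ref{table:DoxasticEpistemicSatisfaction} instantiated at $(\mathtt{pi}_{a},\mathtt{id}_{\mathtt{Im}(\mathtt{pi}_{a})})$ reads, upon applying $\iota$, ``for all $s'$, if $s\mathrel{\mathrm{E}_{\pi_{a}}^{\mathrm{id}_{\mathrm{Im}(\pi_{a})}}}s'$ then $(\mathfrak{S},s')\models\phi$'', which by the equality just proved is precisely the standard semantics of $\K{a}{}$ from \cite{Epistemic_Logic}.

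\emph{For item~2,} we would first compute $\mathrm{D}_{\mathrm{id}_{\states}}^{\mathrm{R}_{\alpha_{a}}}$: since $\mathrm{R}_{\alpha_{a}}=\{(s,\alpha_{a}(s))\mid s\in\states\}$ is the graph of the total operation $s\mapsto\alpha_{a}(s)$, Definition~\ref{definition:FunctionParametricDoxasticAccessibility} with $f=\mathrm{id}_{\states}$ and $g=\mathrm{R}_{\alpha_{a}}$ gives $s\mathrel{\mathrm{D}_{\mathrm{id}_{\states}}^{\mathrm{R}_{\alpha_{a}}}}s'$ iff $\alpha_{a}(s)=s'$, i.e.\ $\mathrm{D}_{\mathrm{id}_{\states}}^{\mathrm{R}_{\alpha_{a}}}=\mathrm{R}_{\alpha_{a}}$---noting that $(\mathrm{id}_{\states},\mathrm{R}_{\alpha_{a}})$ is \emph{not} a doxastic-epistemic function pair, as $\alpha_{a}$ is not idempotent, but the operators of Definitions~\ref{definition:FunctionParametricDoxasticAccessibility} and~\ref{definition:FunctionParametricEpistemicAccessibility} are defined purely by function application and the relational operations of union, converse and transitive closure, so this computation is nonetheless legitimate. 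Hence $\mathrm{E}_{\mathrm{id}_{\states}}^{\mathrm{R}_{\alpha_{a}}}=(\mathrm{R}_{\alpha_{a}}\cup(\mathrm{R}_{\alpha_{a}})^{-1})^{+}$. It then remains to align this with the operational semantics of $\mathrm{R}_{(\alpha_{a}\cup(\alpha_{a})^{-1})^{*}}$: unwinding the defining clauses ($\mathrm{R}_{A^{0}}=\mathrm{id}_{\states}$, $\mathrm{R}_{A^{n+1}}=\mathrm{R}_{A^{n}}\circ\mathrm{R}_{A}$, $\mathrm{R}_{A\cup A'}=\mathrm{R}_{A}\cup\mathrm{R}_{A'}$, $\mathrm{R}_{A^{-1}}=(\mathrm{R}_{A})^{-1}$, and $\mathrm{R}_{A^{*}}=\bigcup_{n\in\mathbb{N}}\mathrm{R}_{A^{n}}$ with $0\in\mathbb{N}$) gives $\mathrm{R}_{(\alpha_{a}\cup(\alpha_{a})^{-1})^{*}}=\mathrm{id}_{\states}\cup\bigcup_{n\geq1}(\mathrm{R}_{\alpha_{a}}\cup(\mathrm{R}_{\alpha_{a}})^{-1})^{n}=\mathrm{id}_{\states}\cup(\mathrm{R}_{\alpha_{a}}\cup(\mathrm{R}_{\alpha_{a}})^{-1})^{+}=\mathrm{id}_{\states}\cup\mathrm{E}_{\mathrm{id}_{\states}}^{\mathrm{R}_{\alpha_{a}}}$, which is the boxed equation. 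Finally $\mathrm{R}_{(\alpha_{a}\cup(\alpha_{a})^{-1})^{*}}$ is an equivalence relation---reflexive since $0\in\mathbb{N}$, transitive since $\mathrm{R}_{A^{m}}\circ\mathrm{R}_{A^{n}}=\mathrm{R}_{A^{m+n}}$, symmetric since $\mathrm{R}_{\alpha_{a}\cup(\alpha_{a})^{-1}}$ equals its own converse and converse commutes with composition and union---and nonempty (it contains $\mathrm{id}_{\states}$), so Theorem~\ref{theorem:S5}.2 applied to it produces a constructible $\pi_{a}:\states\to\states$ with $\mathrm{R}_{(\alpha_{a}\cup(\alpha_{a})^{-1})^{*}}=\mathrm{E}_{\pi_{a}}^{\mathrm{id}_{\mathrm{Im}(\pi_{a})}}$; reading $\mathrm{id}_{\states}$ restricted to $\mathrm{Im}(\pi_{a})$ as $\mathrm{id}_{\mathrm{Im}(\pi_{a})}$ gives claim~(a), and claim~(b) holds because the identity is idempotent, so $(\pi_{a},\mathrm{id}_{\mathrm{Im}(\pi_{a})})$ satisfies Definition~\ref{definition:DEFP}.

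\emph{For item~3,} the plan is to note first that $\sigma_{a}^{M}$ is idempotent: if $M\in\mathrm{cl}_{a}^{s}(\emptyset)$ then $\sigma_{a}^{M}(s)=s$, while if $M\notin\mathrm{cl}_{a}^{s}(\emptyset)$ then $\sigma_{a}^{M}(s)=\mathtt{succ}_{a}^{M}(s)$ and $M\in\mathrm{cl}_{a}^{\mathtt{succ}_{a}^{M}(s)}(\emptyset)$, so in both cases $\sigma_{a}^{M}(\sigma_{a}^{M}(s))=\sigma_{a}^{M}(s)$; hence $(\mathrm{id}_{\states},\sigma_{a}^{M})$ is a doxastic-epistemic function pair. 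Then Definition~\ref{definition:FunctionParametricDoxasticAccessibility} with $f=\mathrm{id}_{\states}$ and $g=\sigma_{a}^{M}$ gives $s\mathrel{\mathrm{D}_{\mathrm{id}_{\states}}^{\sigma_{a}^{M}}}s'$ iff $\sigma_{a}^{M}(s)=s'$, i.e.\ $\mathrm{D}_{\mathrm{id}_{\states}}^{\sigma_{a}^{M}}=\{(s,\sigma_{a}^{M}(s))\mid s\in\states\}$, which is ${_{M}\mathrm{R}_{a}}$ by the definition of the latter in \cite{LDiiP}. We do not expect a real obstacle anywhere: the substance throughout is the earlier adequacy theorems together with definition-chasing. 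The one step that needs care is the alignment in item~2 of the operational semantics of the $\mathrm{R}$-operators with the closure operators---in particular, it is the convention $0\in\mathbb{N}$ that supplies the ``$\mathrm{id}_{\states}\cup$'', i.e.\ the reflexive closure that $\mathrm{E}_{f}^{g}$ does not itself add here---together with being comfortable using $\mathrm{D}_{f}^{g}$ and $\mathrm{E}_{f}^{g}$ for parameters that do not form a doxastic-epistemic function pair; a minor point is checking, in item~1, that the abbreviated $\Sigma$ and $(\states,\iota)$ can be completed to genuine instances of Definitions~\ref{definition:DEST} and~\ref{definition:FPM}.
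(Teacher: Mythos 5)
Your proposal is correct and follows essentially the same route as the paper: items 1 and 3 are handled by unfolding Definitions~\ref{definition:FunctionParametricDoxasticAccessibility}--\ref{definition:FunctionParametricEpistemicAccessibility} (the paper literally says ``inspect definitions'' there, and your use of Proposition~\ref{proposition:EDD}.2 to collapse $\mathrm{E}$ to $\mathrm{D}$ in item~1 is exactly the intended reading), while for item~2 your computation is the paper's displayed chain of equalities run in the opposite direction, with the same key observations that $\mathrm{D}_{\mathrm{id}_{\states}}^{\mathrm{R}_{\alpha_{a}}}=\mathrm{R}_{\alpha_{a}}$ by functionality and that the $n=0$ clause of $\mathrm{R}_{A^{*}}$ supplies the $\mathrm{id}_{\states}$, followed by the same appeal to Theorem~\ref{theorem:S5}.2 for the constructible $\pi_{a}$. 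Your version is merely more explicit than the paper's (e.g., in verifying the function-pair conditions and flagging the $\mathrm{id}_{\states}$ versus $\mathrm{id}_{\mathrm{Im}(\pi_{a})}$ typing), which is harmless.
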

\begin{myproof}
	For 1, inspect 
		Definition~\ref{definition:FunctionParametricEpistemicAccessibility} and 
		\ref{definition:EAasI}.
	
	For 2, consider:
	\begin{align*}
			\mathrm{id}_{\states}\cup\mathrm{E}_{\mathrm{id}_{\states}}^{\mathrm{R}_{\alpha_{a}}}&=
				\mathrm{id}_{\states}\cup(\mathrm{D}_{\mathrm{id}_{\states}}^{\mathrm{R}_{\alpha_{a}}}\cup
				 (\mathrm{D}_{\mathrm{id}_{\states}}^{\mathrm{R}_{\alpha_{a}}})^{-1})^{+}\\
			&=(\mathrm{D}_{\mathrm{id}_{\states}}^{\mathrm{R}_{\alpha_{a}}}\cup
				 (\mathrm{D}_{\mathrm{id}_{\states}}^{\mathrm{R}_{\alpha_{a}}})^{-1})^{*}\\
			&=(
				\mathrm{R}_{\alpha_{a}}\cup
				 (\mathrm{R}_{\alpha_{a}})^{-1})^{*}\qquad\text{($\mathrm{R}_{\alpha_{a}}$ is functional)}\\
			&=(
				\mathrm{R}_{\alpha_{a}}\cup
				 \mathrm{R}_{(\alpha_{a})^{-1}})^{*}\\
			&=(
				\mathrm{R}_{\alpha_{a}\cup
							 (\alpha_{a})^{-1}})^{*}\\
			&=\mathrm{R}_{(\alpha_{a}\cup(\alpha_{a})^{-1})^{*}}\,.
	\end{align*}
			
	For 3, inspect Definition~\ref{definition:FunctionParametricDoxasticAccessibility} and \cite{LDiiP}.
\end{myproof}

\section{Conclusion}
We conclude by	
	mentioning that 
		from $\mathrm{D}_{f}^{g}$ and $\mathrm{E}_{f}^{g}$, 
			we can further construct accessibility relations for modalities of 
					\emph{common and distributed belief and knowledge} in a standard way \cite{Epistemic_Logic,MultiAgents} by taking 
						unions and transitive closures of $\mathrm{D}_{f}^{g}$-relations for 
							common belief, 
						unions and reflexive-transitive closures of $\mathrm{E}_{f}^{g}$-relations for 
							common knowledge, and
						intersections of $\mathrm{D}_{f}^{g}$- and $\mathrm{E}_{f}^{g}$-relations for 
							distributed belief and knowledge, respectively.

For example,
	accessibility relations
		$\mathrm{DD}_{\mathcal{C}}^{g}$ for distributed belief, 
		$\mathrm{CD}_{\mathcal{C}}^{g}$ for common belief, 		
		$\mathrm{DE}_{\mathcal{C}}^{g}$ for distributed knowledge, and 
		$\mathrm{CE}_{\mathcal{C}}^{g}$ for common knowledge with respect to 
			doxastic-epistemic function pairs $(\pi_{a},g)$ and 
			a community $\mathcal{C}$ of agents $a$ can be constructed:
				$$\begin{array}{rcl@{\qquad}rcl}
					\mathrm{DD}_{\mathcal{C}}^{g}&\defeq&
						\bigcap_{a\in\mathcal{C}}\mathrm{D}_{\pi_{a}}^{g} &
					\mathrm{CD}_{\mathcal{C}}^{g}&\defeq&
						(\bigcup_{a\in\mathcal{C}}\mathrm{D}_{\pi_{a}}^{g})^{+}\\[2\jot]
					\mathrm{DE}_{\mathcal{C}}^{g}&\defeq&
						\bigcap_{a\in\mathcal{C}}\mathrm{E}_{\pi_{a}}^{g} &
					\mathrm{CE}_{\mathcal{C}}^{g}&\defeq&
						(\bigcup_{a\in\mathcal{C}}\mathrm{E}_{\pi_{a}}^{g})^{*}\\[\jot]
					\mathrm{DE}_{\mathcal{C}}&\defeq&
						\mathrm{DE}_{\mathcal{C}}^{\mathrm{id}_{\states}} &
					\mathrm{CE}_{\mathcal{C}}&\defeq&
						\mathrm{CE}_{\mathcal{C}}^{\mathrm{id}_{\states}}\,.
				\end{array}$$

\bibliographystyle{alpha}

\end{document}